\theoremstyle{plain}
\newtheorem{lem}{Lemma}
\newtheorem{prop}{Proposition}
\newtheorem{remark}{Remark}
\renewcommand{\arraystretch}{1.3}
\renewcommand{\nomgroup}[1]{%
    \ifthenelse{\equal{#1}{A}}{\item[\emph{\textbf{Set and Index}}]}{%
    \ifthenelse{\equal{#1}{B}}{\item[\emph{\textbf{Parameters}}]}{%
    \ifthenelse{\equal{#1}{C}}{\item[\emph{\textbf{Variables}}]}
    }
    }
    }
\begin{document}

\title{Proactive Robust Hardening of Resilient Power Distribution Network: Decision-Dependent Uncertainty Modeling and Fast Solution Strategy}

\author{Donglai Ma,
	        Xiaoyu Cao,~\IEEEmembership{Member,~IEEE}, 
	        Bo Zeng,~\IEEEmembership{Member,~IEEE},
	        Qing-Shan Jia,~\IEEEmembership{Senior Member,~IEEE}, \\
	        Chen Chen,~\IEEEmembership{Senior Member,~IEEE}, 
	        Qiaozhu Zhai,~\IEEEmembership{Member,~IEEE},     
	        and Xiaohong Guan,~\IEEEmembership{Life Fellow,~IEEE} 
	
	\thanks{This work was partially supported by the National Key Research and Development Program of China under Grant 2022YFA1004600, and by the National Natural Science Foundation of China under Grant 62373294 and 62192752. \emph{(Corresponding author: Xiaoyu Cao.)}}%
	\thanks{Donglai Ma, Xiaoyu Cao and Qiaozhu Zhai are with the School of Automation Science and Engineering and the Ministry of Education Key Laboratory for Intelligent Networks and Network Security, Xi’an Jiaotong University, Xi’an 710049, Shaanxi, China (e-mail: mdl1622267691@stu.xjtu.edu.cn; cxykeven2019@xjtu.edu.cn; qzzhai@sei.xjtu.edu.cn).}%
	\thanks{Bo Zeng is with the Department of Industrial Engineering and the Department of Electrical and Computer Engineering, University of Pittsburgh, Pittsburgh, PA 15106 USA (e-mail: bzeng@pitt.edu).}%
	\thanks{Qing-Shan Jia is with the Center for Intelligent and Networked Systems, Department of Automation, Tsinghua University, Beijing 100084, China (e-mail: jiaqs@mail.tsinghua.edu.cn).}%
	\thanks{Chen Chen is with the School of Electrical Engineering, Xi’an Jiaotong University, Xi’an 710049, Shaanxi, China (e-mail: morningchen@xjtu.edu.cn).}%
	\thanks{Xiaohong Guan is with the School of Automation Science and Engineering and the Ministry of Education Key Laboratory for Intelligent Networks and Network Security, Xi’an Jiaotong University, Xi’an 710049, Shaanxi, China, and also with the Center for Intelligent and Networked Systems, Department of Automation, Tsinghua University, Beijing 100084, China (e-mail: xhguan@xjtu.edu.cn).}%
	}

\maketitle

\begin{abstract}
To address the power system hardening problem, traditional approaches often adopt robust optimization (RO) that considers a fixed set of concerned contingencies, regardless of the fact that hardening some components actually renders relevant contingencies impractical. In this paper, we directly adopt a dynamic uncertainty set that explicitly incorporates the impact of hardening decisions on the worst-case contingencies, which leads to a decision-dependent uncertainty (DDU) set.  Then, a DDU-based robust-stochastic optimization (DDU-RSO) model is proposed to support the hardening decisions on distribution lines and distributed generators (DGs). Also, the randomness of load variations and available storage levels is considered through stochastic programming (SP) in the innermost level problem. Various corrective measures (e.g., the joint scheduling of DGs and energy storage) are included, coupling with a finite support of stochastic scenarios, for resilience enhancement. To relieve the computation burden of this new hardening formulation, an enhanced customization of parametric column-and-constraint generation (P-C\&CG) algorithm is developed. By leveraging the network structural information, the enhancement strategies based on \textit{resilience importance indices} are designed to improve the convergence performance. Numerical results on 33-bus and 118-bus test distribution networks have demonstrated the effectiveness of DDU-RSO aided hardening scheme. Furthermore, in comparison to existing solution methods, the enhanced P-C\&CG has achieved a superior performance by reducing the solution time by a few orders of magnitudes. 
\end{abstract}

\def\abstractname{Note to Practitioners}
\begin{abstract}
This paper proposes an optimal hardening model for enhancing the resilience of power distribution systems. Our planning model is extended under a trilevel RSO framework incorporating the DDU set. Compared to the traditional RO built upon the static uncertainty sets, the DDU-RSO formulation has a dynamic and stronger modeling capability, which captures the endogenous nature of contingency uncertainties associated with proactive hardening decisions. Also, the exogenous uncertainty of power loads and available storage levels is considered by introducing a scenario-based SP formulation in the innermost level problem. Note that the real shape and size of DDU set are changeable depending on the proactive decisions, which makes our problem very challenging to compute. To exactly and efficiently solve the DDU-RSO model, we design and implement a customized P-C\&CG algorithm. Also, the enhancement strategy based on \textit{resilience importance indices} is applied, which helps address the multi-solution dilemma widely existing in network optimization problems. It is worthy noting that the resilience importance indices can be attained not only using numerical calculation, but also through the prior knowledge of system operators. The probability information of contingency events is also helpful for determining the resilience importance indices to reduce the conservatism of RO decisions. By exploiting the ``deep knowledge" contained in power systems, the customized P-C\&CG algorithm has demonstrated a superior scalable capability over existing methods (e.g., the basic C\&CG and a type of Benders decomposition algorithm). With this advanced analytical tool, the proposed optimization framework can be readily implemented for resilient planning and operation of real-world power systems.
\end{abstract}

\begin{IEEEkeywords}
Power system hardening, robust-stochastic optimization, decision-dependent uncertainty, parametric column-and-constraint generation, resilience importance indices
\end{IEEEkeywords}

\IEEEpeerreviewmaketitle

\section{Introduction}

\IEEEPARstart{T}{he} high-impact and low-probability extreme weather events (e.g., hurricanes, floods and ice storms) may seriously destroy the electrical power infrastructure, especially the power distribution network (PDN) \cite{poudyal2022risk, chen2017modernizing}. 
For preparing the PDN to be intrinsically resilient against hazardous damages, the two-stage robust optimization (RO) method has been heavily adopted in existing studies (e.g., \cite{mishra2021review,shi2022enhancing,hasanzad2021robust,yuan2016robust}) to support the risk-averse hardening decisions. One evident advantage of RO is its weak reliance on empirical data (which could be difficult to attain due to the low occurrence frequency of hazardous events) to construct exact probability distributions or a scenario tree. Instead, it seeks improved resilience out of an enveloped uncertainty set under the worst-case consideration. 

One central component behind implementing the RO-based hardening method lies on the appropriate modeling of contingency uncertainties, i.e., defining the uncertainty set. 
Ref \cite{yuan2016robust} introduced a two-stage robust optimization (RO) model that integrates line hardening and the allocation of distributed energy resources (DERs) within a multi-stage, multi-zone uncertainty set designed to capture the spatial and temporal dynamics of extreme weather events. Building on this, Ref \cite{Zhang2021TSG} extended the model to include additional uncertainties, e.g., the speed and direction of typhoons. Beyond just DERs, strategies for enhancing network resilience also involve coordinating line hardening with other measures, e.g., the topology reconfiguration realized by remotely controlled smart-switches \cite{lin2018tri, li2023resilience}, and the scheduling of mobile energy resources \cite{tao2022multi}. Furthermore, the resilient hardening strategies for coupled power distribution and transportation systems are also studied in some literature, e.g., in Ref \cite{wang2018resilience, gan2021tri}.
In the conventional RO framework, the uncertainty set is generally static, predetermined by prior knowledge (e.g., the $N-k$ contingency). However, this static approach may not adequately reflect the dynamic changes in the system induced by proactive hardening actions. For example, hardening measures such as overhead structure reinforcement, vegetation management, undergrounding of power lines, and elevating substations or distributed generators (DGs) can greatly reduce the damage risks of network assets in response to extreme weather events \cite{mishra2021review, Zhang2021TSG, 7514755, lin2018tri}.  Specifically, reinforcing overhead structures is a fundamental approach to harden the distribution systems, including upgrading towers and poles with strengthened materials, enhancing guying, and increasing the number of supporting poles. Effective vegetation management is also crucial, as fallen trees are one of the major factors to cause power outages under severe storms \cite{7514755}. Burying distribution lines as underground cables can significantly decrease the system's vulnerability to strong winds, lightning, and vegetation interference. But it could raise very high capital expenditures (CaPEX). Some other measures, e.g., elevating substations or DGs, and relocating critical facilities to more safety areas, could help mitigate the disruption of floods \cite{movahednia2021power,souto2022power}.
So there is a strong correlation between hardening decisions and the corresponding contingencies. Nevertheless, this relationship is largely overlooked in the existing literature, with a preference for static or decision-independent uncertainty (DIU) sets. It is clear that the uncertainty of PDN concerning hardening decisions evidently exhibits endogenous nature, also known as the \emph{decision-dependent uncertainty} (DDU) \cite{zeng2022two,Zhang2022,li2023distributionally}.

Although a few existing studies on PDN hardening consider the impact of proactive decisions on contingencies when construct the uncertainty set, 
the formulations are solved heuristically, or develop a DIU-based model and then computed using traditional method. 
Ref \cite{Wang2019} proposed a DIU-RO model that improved the resilience of PDN with controllable and renewable distributed generators by coordinating with line hardening and dynamic microgrids formation. In Ref \cite{shen2023resilience}, the network hardening and DERs configuration were considered in long-term preventive investment for resilience purposes. Ref \cite{9963579} presented a standard RO model to optimize both the hardening of tie lines and the deployment of bilateral tie switches. Also, a progressive detection mechanism was designed for estimating the potential propagation of power outages.
To improve the integrated-energy resilience, Ref \cite{li2023robust} developed a robust expansion planning and hardening framework for a multi-energy distribution system with interdependent electricity, natural gas and heating networks. Ref \cite{he2018robust} studied the proactive robust hardening of integrated electricity-gas network. The Claude Shannon’s information theory was adopted to formulate the uncertainty set instead of the classical $N-k$ criterion.  
Such type of uncertainty set was also utilized in \cite{7514755} to characterize the impact of different hardening strategies on network contingency probability. 

Actually, we have noted some recent studies on DDU-based RO methodology with power system applications. A two-stage robust generation dispatch model was proposed in Refs \cite{chen2022robust,tan2024robust}, which determined the preparatory curtailment of renewable energy (RE) generation in the first stage problem. The curtailment threshold could affect the fluctuation range of real-time RE outputs, leading to a DDU set. Considering the uncertainty of power loads and its correlation with time-of-use (ToU) prices, Ref \cite{10122720} developed a DDU-based robust unit commitment (UC) model to support the generation scheduling with a well-designed ToU pricing scheme. The hybrid modeling of endogenous and exogenous uncertainties was considered in Ref \cite{haghighat2023robust} for capacity sizing of microgrids. Ref \cite{zhang2021robust} studied a virtual power plant (VPP) scheduling problem, which addressed the DIUs associated with market clearing prices and wind power generation, as well as the DDUs pertaining to real-time reserve demanding. In Ref \cite{gu2023distributed}, a RO-based multi-port and multi-period feasible region formulation was developed to capture the flexibility of distribution network operations. The uncertainty of dispatch instructions was modeled as a DDU set, and the worst-case scenario was identified for operational security assessment. On the other hand, a variety of solution algorithms were designed and implemented for computing the DDU-based RO models. Ref \cite{chen2023robust} proposed an alternating direction algorithm and a column generation (CG) algorithm to compute the RO problem with continuous form of DDU set. For the RO problems with mixed-integer DDU sets, the modified column-and-constraint generation (C\&CG) in \cite{10541096} and the adaptive C\&CG in \cite{chen2022robust} were developed by leveraging special problem structures. Although the dual C\&CG approach proposed in \cite{tan2024robust} provided a more general measure for handling mixed-integer variables in the DDU set, it may expand the feasible region of the uncertainty set, potentially reducing the solution efficiency. In contrast, the modified Benders decomposition (BD) algorithm proposed in \cite{Zhang2022} and the parametric C\&CG (P-C\&CG) algorithm proposed in \cite{zeng2022two} are general algorithms to exactly  solve DDU-based RO formulations. Moreover, as indicated by Ref \cite{zeng2022two} and the numerical results of our study, the P-C\&CG  demonstrates a superior solution performance compared to the BD algorithms, especially for large-scale problems.	

However, most of the current resilience studies (e.g., Refs \cite{yuan2016robust,Zhang2021TSG,lin2018tri, li2023resilience,tao2022multi,7514755, 10541096}) only consider the contingency uncertainty, while neglecting other critical uncertain factors (e.g., the random variation of power loads). Moreover, the practical correlation between hardening decisions and the corresponding contingencies is often overlooked in the existing literature, so they adopt the static or DIU sets. Therefore, analytical studies and deep insights into the holistic representation of endogenous and exogenous uncertainties for resilient system hardening are still lacking. Indeed, improper characterization of the structural properties of the DDU set may influence its robust feasibility, resulting in overly conservative outcomes.

In this paper, we present a resilient PDN hardening method under endogenous and exogenous uncertainties. Note that the endogenous uncertainty lies on the contingency of network assets, e.g., distribution lines and DGs. A closed-form DDU set is developed to represent the contingency uncertainty as a point-to-set mapping of proactive hardening decisions. Then, the exogenous uncertainties are considered as the randomness of power loads and available storage levels (at the beginning of service restoration), which also could largely influence the load shedding severity. To capture the exogenous uncertainties, a finite support of stochastic scenarios is introduced associating with the operational corrective decisions. As a consequence, a novel DDU-based robust-stochastic optimization (DDU-RSO) formulation is proposed. Specifically, a trilevel program following “defender-attacker-defender” framework is developed to optimize the hardening decisions, subject to the worst-case contingency scenario under DDU perspectives. Moreover, the innermost level problem is modeled as a scenario-based stochastic program (SP). Various operational corrective measures (e.g., the joint scheduling of DGs and energy storage) are coordinated with respect to stochastic scenarios, aiming to minimize the mathematical expectation of load curtailment.
	
Different from conventional DIU-based RO, the real shape and size of DDU set are dynamically changeable with the proactive decision, making our problem rather challenging to compute. To relieve the computation burdens, an enhanced customization of P-C\&CG algorithm is developed. By enumerating the Cartesian product of the vertex sets in the recourse dual feasible regions across all scenarios, and constructing the corresponding optimality conditions, the enhanced P-C\&CG can achieve exact and efficient solution. Compared to the P-C\&CG proposed in \cite{zeng2022two}, our algorithm is specifically designed to address the DDU-RSO formulation (handling a mixture of endogenous and exogenous uncertainties). By exploiting the structural information of distribution networks, the \emph{resilience importance indices} are utilized to improve the convergence performance. Although the basic idea of  ``resilience importance indices'' has appeared in the existing publications \cite{shen2012discovery,fan2022critical,zhang2020critical} to heuristically study the critical components of a grid, to the best of our knowledge, it is the first time that this concept is applied to enhance an exact solution algorithm on this regard. As demonstrated by our numerical results, the proposed algorithm outperforms the basic C\&CG, and is drastically superior to the Benders decomposition (BD) algorithm. Particularly, compared to the modified BD in \cite{Zhang2022,zhang2021robust}, the enhanced P-C\&CG can solve the DDU-RSO formulation by a few orders of magnitude faster.  

In comparison to current studies, the major contributions of this paper are summarized as below:
\begin{enumerate}
\item The endogenous nature of contingency uncertainties for a proactively reinforced PDN, which considers both the disruptions of distribution lines and DGs, is modeled and analytically represented as a closed-form DDU set.  
\item A trilevel RSO formulation with DDU set is proposed to support the resilient hardening decisions considering both endogenous and exogenous uncertainties. The latter ones are modeled as a scenario-based SP in the lower-level operational problem.
\item To address the computational challenges of DDU-RSO model, a customized P-C\&CG algorithm is designed and enhanced by fully exploiting the problem structure.
\end{enumerate}

The remainder of this paper is organized as below. Section \ref{DDU-RSO} formulates the DDU-RSO model. Section \ref{PCCG} develops an enhanced customization of P-C\&CG algorithm. Section \ref{Numerical} shows the numerical results by conducting comparative case studies. The conclusions are presented and discussed in Section \ref{Conclude}.

\section{DDU-RSO Based Resilient Hardening Model}
\label{DDU-RSO}
We consider a two-stage optimization framework for the hardening decisions of PDN. The tree-spanning topology of a PDN can be represented by sets of nodes and branches (i.e., $\mathcal{N}$ and $\mathcal{B}$). Note that in the first stage, the proactive hardening variables $\bm{x}$ ($= \bm{z} \bigcup \bm{r}$, a binary vector indicating the reinforcement states of distribution lines ($\bm{z}$) and DGs ($\bm{r}$)) should be determined before the realization of any contingency, which is represented by a binary vector $\bm{u}$ ($= \bm{\omega} \bigcup \bm{\nu}$, including the faulting states of lines ($\bm{\omega}$) and DGs ($\bm{\nu}$). Then, in the second-stage (recourse) problem, the joint scheduling of DGs and ESSs is considered as the operational corrective measures. To capture the randomness of power loads variation as well as the available storage levels at the starting period of service restoration, we introduce a finite support of stochastic scenarios $\mathcal{S}=\left\lbrace \xi_s| s=1,2,\ldots,N_S \right\rbrace $. Parameter $\pi_s$ is to denote the probability of scenario $s$, such that $\sum_{s\in \mathcal{S}} \pi_s =1$. As a consequence, the innermost-level problem can be modeled as a scenario-based SP, where the post-event corrective decision $\bm{y}_s$ can be settled to minimize the mathematical expectation of load shedding.

Let $\mathcal{U}(\bm{x})$  denote the decision-dependent contingency set, a point to set mapping for $\bm{u}$. We next present the DDU-based robust hardening problem with innermost stochastic representation ($\mathbf{DDU-RSO}$) as follows:  
\begin{equation}
	\label{OBJ}
	\Gamma=\min_{\bm{x}\in \mathcal{X}} \max_{\bm{u}\in \mathcal{U}(\bm{x})} \sum_{s\in \mathcal{S}} \pi_s \min_{\bm{y}_s \in \mathcal{Q}_s(\bm{u})}   \sum_{j\in \mathcal{N}}\sum_{t\in \mathcal{T}} \rho_j p^{c,s}_{j,t}
\end{equation}
where 
\begin{equation}
	\label{FS}
	\mathcal{X}=\left\lbrace \bm{x}\in \{0,1\}^{{n_1}\times{n_2}}: \ \sum_{(i,j)\in \mathcal{B}} c_{ij}^L z_{ij} +\sum_{j\in \mathcal{N}^{dg}} c_{j}^G r_{j} \leq b \right\rbrace 
\end{equation}
with $z_{ij}$/$r_j$ being the binary hardening decisions on distribution line $(i,j)$/DG $j$, where $z_{ij}=1$/$r_{j}=1$ when it is structurally protected and $0$ otherwise. Parameters $c_{ij}^L$/$c_j^G$ and $b$ represent the hardening cost factor of line $(i,j)$/DG $j$ and the cost budget for the entire hardening plan, respectively. Also, we have $p^{c,s}_{j,t}$ being the variables of active load shedding at time slot $t$ in scenario $s$, where the priority weight of loads serving at node $j$ is denoted by $\rho_j$.

We note that burying the overhead power lines as underground cables and constructing barriers for DGs can effectively protect them from extreme weather events. Because the fault probability can be reduced to an extremely low level by these hardening measures, they are often assumed to be fully protected under catastrophic occasions \cite{Zhang2021TSG,gan2021tri,tao2022multi}. The contingency uncertainty set associating with proactive hardening decisions $\bm{x}$ can be written as:
\setlength{\arraycolsep}{-0.3em}
\begin{eqnarray}
	&&\mathcal{U}(\bm{x})=\{ \bm{u}\in \{0,1\}^{{m}_1\times{m}_2}:  \nonumber \\
	&&\quad \quad \quad \quad \quad \quad \quad \sum_{(i,j)\in \mathcal{B}^{v}} (1-\omega_{ij}) \leq k^L, \label{DDU:CONS1}  \\
	&&\quad \quad \quad \quad \quad \quad \quad \omega_{ij} \geq z_{ij}, \ \forall (i,j) \in \mathcal{B}^v, \label{DDU:CONS2} \\
	&&\quad \quad \quad \quad \quad \quad \quad \sum_{j\in \mathcal{N}^{vdg}} (1-\nu_j) \leq k^G, \label{DDU:CONS3}  \\
	&&\quad \quad \quad \quad \quad \quad \quad  \nu_{j} \geq r_{j}, \ \forall j \in \mathcal{N}^{vdg} \label{DDU:CONS4} \}
\end{eqnarray}
where $\mathcal{B}$ and $\mathcal{N}^{dg}$ represent the set of distribution lines and DGs, respectively, while $\mathcal{B}^{v}$ and $\mathcal{N}^{vdg}$ denote the corresponding vulnerable sets. Note that the vulnerable sets are built to capture practical factors (e.g., disasters pre-warning information, geographical environment, and the aging status of  network facilities). The system components excluded from the vulnerable sets (i.e., $\mathcal{B} \backslash \mathcal{B}^v$ and $\mathcal{N}^{dg} \backslash \mathcal{N}^{vdg}$) are assumed to be free of the disastrous influence. So their operational constraints always hold, and can be omitted from the DDU set. Binary variables $z_{ij}$ and $r_j$ represent the hardening status of line $(i,j)$ and generator $j$, respectively. Uncertain variables $\omega_{ij}$/$\nu_j$ denote the contingency status of line $(i,j)$/DG $j$, where $\omega_{ij}=0$/$\nu_j=0$ if it is damaged and $1$ otherwise. Following the $N-k$ criterion, the total number of faulting lines and DGs are constrained by \eqref{DDU:CONS1} and \eqref{DDU:CONS3}, respectively. Besides, the proactive hardening decisions impose restrictions on $\omega_{ij}$/$\nu_j$, reflecting a decision-dependent relationship as in \eqref{DDU:CONS2} and \eqref{DDU:CONS4}. Taking \eqref{DDU:CONS1}-\eqref{DDU:CONS2} as an example, as long as $z_{ij}=1$, power line $(i,j)$ is assumed to be under a complete protection (e.g., by overhead lines with underground cables), so that $\omega_{ij}$ will be fixed to 1. In other instances, however, they are free variables. Similar assumptions are adopted for the DDU model in \eqref{DDU:CONS3}-\eqref{DDU:CONS4}.

Moreover, the feasible set of second-stage recourse decisions with respect to each stochastic scenario $s \in \mathcal{S}$ (subject to hardening actions and a DDU set) is presented as below:
\setlength{\arraycolsep}{-0.0em}
\begin{eqnarray}
	&& \bm{y}_s=\left\lbrace p_{j,t}^s, q_{j,t}^s,p_{j,t}^{c,s},q_{j,t}^{c,s},U_{j,t}^s|\forall j\in \mathcal{N},\forall t\in \mathcal{T}  \right\rbrace \nonumber \\ 
	&&\quad \quad  \bigcup \left\lbrace P_{ij,t}^s,Q_{ij,t}^s|\forall (i,j)\in {\cal B},\forall t\in \mathcal{T} \right\rbrace \subseteq \mathbb{R}^{e}\nonumber\\ 
	&& \mathcal{Q}_s(\bm{u})=\{\bm{y}_s\in \mathbb{R}^{e}: \nonumber \\ 
	&& \sum\limits_{i \in \psi (j)} {{P_{ij,t}^s} - } \sum\limits_{k \in \phi (j)} {P_{jk,t}^s}  = PD_{j,t}^s-p_{j,t}^{c,s}-p_{j,t}^s-gp_{j,t}^s,  \nonumber\\
	&&\quad \quad \quad \quad\quad \quad\quad \quad\quad \quad \quad\quad \quad\quad \quad\quad\quad \forall j \in {{\cal N}} ,\forall t\in {{\cal T}}  \label{RO:CONS1} \\
	&& \sum\limits_{i \in \psi (j)} {{Q_{ij,t}^s} - } \sum\limits_{k \in \phi (j)} {{Q_{jk,t}^s}}  = QD_{j,t}^s-q_{j,t}^{c,s}-q_{j,t}^s-gq_{j,t}^s, \nonumber\\
	&&\quad \quad \quad \quad\quad \quad\quad \quad\quad \quad \quad\quad \quad\quad \quad\quad\quad \forall j \in {{\cal N}} ,\forall t\in {{\cal T}}  \label{RO:CONS2} \\
	&& 0 \le p_{j,t}^{c,s} \le PD_{j,t}^s, \quad \forall j \in {{\cal N}} ,\forall t\in {{\cal T}}\label{RO:CONS3} \\
	&& 0 \le q_{j,t}^{c,s} \le QD_{j,t}^s, \quad \forall j \in {{\cal N}} ,\forall t\in {{\cal T}}\label{RO:CONS4} \\
	&& \nu_j \underline{G}_j \le p_{j,t}^s \le \nu_j \overline{G}_j, \quad \forall j \in {{\cal N}^{dg}},\forall t\in {{\cal T}}\label{RO:CONS5} \\
	&&\nu_j \overline{G}_j \tan\underline{\theta}_j \le q_{j,t}^s \le \nu_j \overline{G}_j \tan\overline{\theta}_j, \quad \forall j \in {{\cal N}^{dg}},\forall t\in {{\cal T}}\label{RO:CONS6} \\
	&& 0 \leq gp_{j,t}^{s} \leq \overline{P}_{j}^{ess}, \quad \forall j \in {{\cal N}^{ess}},\forall t\in {{\cal T}} \label{RO:CONS15-1} \\
	&& 0 \leq gq_{j,t}^{s} \leq \overline{Q}_{j}^{ess}, \quad \forall j \in {{\cal N}^{ess}},\forall t\in {{\cal T}} \label{RO:CONS15-2} \\
	&& \tilde{E}_j^{s}-\sum_{\tau=0}^t \left(gp_{j,\tau}^{s}/\eta_j^{dis}\right)  \geq 0, \quad \forall j \in {{\cal N}^{ess}},\forall t\in {{\cal T}} \label{RO:CONS16} \\ 
	&&\underline{U}_j \le U_{j,t}^s \le  \overline{U}_j, \quad \forall j \in {{\cal N}},\forall t\in {{\cal T}}\label{RO:CONS10}\\
	&& -\omega_{ij}\overline{P}_{ij} \le P_{ij,t}^s \le \omega_{ij}\overline{P}_{ij}, \quad \forall (i,j)\in {{\cal B}},\forall t\in {{\cal T}}\label{RO:CONS7} \\
	&& -\omega_{ij}\overline{Q}_{ij} \le Q_{ij,t}^s \le \omega_{ij}\overline{Q}_{ij}, \quad \forall (i,j)\in {{\cal B}},\forall t\in {{\cal T}}\label{RO:CONS8} \\
	&&(U_{i,t}^s-U_{j,t}^s-\frac{r_{ij}P_{ij,t}^s+x_{ij}Q_{ij,t}^s}{U_0})\omega_{ij}=0, \forall (i,j) \in \mathcal{B},\forall t\in {{\cal T}}\} \nonumber\\
	&& \label{RO:CONS9} 
\end{eqnarray}
where $PD_{j,t}^s$/$QD_{j,t}^s$ denote the nominal values of active/reactive power loads, respectively. As in power balance equations \eqref{RO:CONS1} and \eqref{RO:CONS2}, the DGs and energy storage systems (ESSs) are employed to fulfill the power demands. The active/reactive power outputs of DGs at node $j$ (belonging to a node subset ${{\cal N}^{dg}}$) are denoted by variables $p_{j,t}^s$/$q_{j,t}^s$. Once the power loads cannot be met, the on-emergency mechanism of intentional load shedding would be activated, as represented by variables $p_{j,t}^{c,s}$/$q_{j,t}^{c,s}$. The shedding loads are required to be nonnegative, as in \eqref{RO:CONS3} and \eqref{RO:CONS4}. The active/reactive power generation of DGs at node $j$  (belonging to a node subset $\mathcal{N}^{dg}$) is constrained by \eqref{RO:CONS5} and \eqref{RO:CONS6}. Parameters $\overline{G}_j$/$\underline{G}_j$ denote the upper/lower bound of active power outputs of DGs. The reactive power served by DGs can be realized using the power electronic equipment, which is adjustable depending on their rated capacities and power factor angles (ranging between $[\underline{\theta}_j, \overline{\theta}_j]$) \cite{cao2022resilience}. Besides, the DGs outputs are restricted by faulting status $\nu_j$. For each ESS unit deployed at node $j$  (belonging to a node subset $\mathcal{N}^{ess}$), its discharging active/reactive power is constrained by the power rating $\overline{P}_{j}^{ess}$/$\overline{Q}_{j}^{ess}$, as in \eqref{RO:CONS15-1} and \eqref{RO:CONS15-2}. Also, the accumulation of discharged energy cannot exceed the available storage level at the beginning of service restoration, as defined in \eqref{RO:CONS16}. Parameter $\eta_j^{dis}$ represents the discharge efficiency of ESSs. Note that the initial storage level is considered as an uncertain variable with respect to scenario $s$, as denoted by $\tilde{E}_j^{s}$.

Additionally, the linear \emph{DistFlow} model \cite{baran1989reconfig} is applied and modified to describe the operations of PND with damaged lines. Variables $P_{ij,t}^s$/$Q_{ij,t}^s$ represent the active/reactive power flows on line $(i,j)$, and $U_{j,t}^s$ denotes the voltage level of node $j$. These network flow variables are constrained by \eqref{RO:CONS7}-\eqref{RO:CONS9}. The allowable fluctuation range of voltage levels is imposed as $\overline{U}_{j}$/$\underline{U}_{j}$ in \eqref{RO:CONS10}. The upper and lower bounds of power flow variables (i.e., $\overline{P}_{ij}$/$\overline{Q}_{ij}$ and $\underline{P}_{ij}$/$\underline{Q}_{ij}$) are encapsulated by the faulting status of distribution lines. When $\omega_{ij}=0$, the power flows on line $(i,j)$ will be forced as zero. Also, different from the traditional ``Big-M" based expression for capturing the relationship of voltage and line flow status under contingency occasions, a bilinear formulation is developed as in \eqref{RO:CONS9}. It helps avoid the selection of  ``Big-M" factor, and is beneficial for the robustness of model computation \cite{zeng2014chance}. 

\begin{remark}
\label{rmk:1}
Following the ideas in \cite{lei2018routing}, the bilinear structure in \eqref{RO:CONS9} can be readily linearized. For example, the bilinear term $A_{ij}\omega_{ij}=(\tilde{A}_{ij})$, i.e., $A$=$P$, $Q$ or $U$, can be equivalently converted into the following linear constraints:
\setlength{\arraycolsep}{-0.2em}
\begin{eqnarray}
&& \tilde{A}_{ij} \leq \overline{A}_{ij}\omega_{ij}, \quad \forall (i,j) \in \mathcal{B}  \label{RO:CONS11}\\
&& \tilde{A}_{ij} \leq A_{ij}, \quad \forall (i,j) \in \mathcal{B} \label{RO:CONS12}\\
&& \tilde{A}_{ij} \geq A_{ij}-\overline{A}_{ij}(1-\omega_{ij}), \quad \forall (i,j) \in \mathcal{B} \label{RO:CONS13}\\
&& \tilde{A}_{ij} \geq 0, \quad \forall (i,j) \in \mathcal{B} \label{RO:CONS14}
\end{eqnarray}

\end{remark}

In general, the compact matrix expression of $\mathbf{DDU-RSO}$ in \eqref{OBJ}-\eqref{RO:CONS9} is given as below:
\setlength{\arraycolsep}{-0.3em}
\begin{eqnarray}
	&& \mathbf{DDU-RSO}: \ \Gamma=\min_{\bm{x}\in \mathcal{X}} \max_{\bm{u}\in \mathcal{U}(\bm{x})} \sum_{s\in \mathcal{S}} \pi_s \min_{\bm{y}_s \in \mathcal{Q}_s(\bm{u})}  \ \bm{\rho}^T\bm{l^c}_s \label{M:OBJ} \\
	&&s.t. \ \mathcal{X}=\left\lbrace \bm{x}\in \{0,1\}^{{n_1}\times{n_2}}: \ \bm{B} \bm{x} \leq b \right\rbrace \label{M:FS} \\
	&&\mathcal{U}(\bm{x})=\{ \bm{u}\in \{0,1\}^{{m_1}\times{m_2}}: \ \bm{F} \bm{u} \geq \bm{f}, \label{M-DDU1}  \\	
	&&\quad \quad \quad \quad \quad \quad \quad \quad \quad \quad \quad \quad \bm{u} \geq \bm{x} \label{M-DDU2} \} \\
	&&\mathcal{Q}_s(\bm{u})=\{ \bm{y}_s=\left\lbrace \bm{p}_s, \bm{l^c}_s, \bm{v}_s  \right\rbrace \in \mathbb{R}^{e}:  \label{M:CONS0} \\ 
	&&\quad \quad \quad \quad \quad \quad \quad \quad \quad \ \bm{A}\bm{p}_s=\bm{\tilde{l}}_s-\bm{l^c}_s, \label{M:CONS1}  \\
	&&\quad \quad \quad \quad \quad \quad \quad \quad \quad \ \bm{J}\bm{p}_s+\bm{H}\bm{v}_s\geq \bm{C}\bm{u}+\bm{h} \label{M:CONS2} 
\end{eqnarray}

Note that the objective function \eqref{OBJ} is abstracted in \eqref{M:OBJ}. Eq. \eqref{M:FS} represents the hardening budget constraint in \eqref{FS}. The DDU set as a mapping from $\bm{x}$ to $\bm{u}$ is denoted by \eqref{M-DDU1}-\eqref{M-DDU2}. The on-emergency operational status can be defined by three classes of variables, i.e., $\bm{p}_s$ for active/reactive nodal generation (the accumulative outputs of DGs and ESSs) and branch power flows, $\bm{l^c}_s$ for load shedding, and $\bm{v}_s$ for nodal voltage levels. We mention that the power balance requirements (i.e., constraints \eqref{RO:CONS1} and \eqref{RO:CONS2}) are imposed in \eqref{M:CONS1},  where $\bm{\tilde{l}}_s$ is the parameter vector for nominal load demands. Other constraints concerning DGs/ESSs outputs and DistFlow model are denoted by \eqref{M:CONS2}, corresponding to \eqref{RO:CONS3}-\eqref{RO:CONS8} and \eqref{RO:CONS11}-\eqref{RO:CONS14}.

\begin{remark}
\label{rmk:2}
As $\mathcal{U}(\bm{x})$ depends on $\bm{x}$, it  has a complex and variable structure. Some reformulation techniques can be used to convert $\mathcal{U}(\bm{x})$  into a DIU set $\mathcal{U}^0$, so that the basic C\&CG can be applied to compute the associated RO model \cite{Zhang2021TSG}. Recently, a new variant of C\&CG directly exploits the structure of $\mathcal{U}(\bm{x})$~\cite{zeng2022two} and shows a better solution capacity. In the context of distribution systems, we customize it with an enhancement based on  \textit{resilience importance indices}, obtaining a strong method to compute the DDU-RSO problem.
\end{remark}

\section{Enhanced Parametric C\&CG Algorithm}
\label{PCCG}

\subsection{Single-Level Reformulation of DDU-RSO}
To address the computational challenges of $\mathbf{DDU-RSO}$, we first consider its max-min substructure by leveraging the extreme points of recourse constraints set. Due to the slackness of load shedding variables $\bm{l^c}_s$,  we have that the dual problem of \eqref{M:CONS0}-\eqref{M:CONS2}  always has a finite optimal value.  

\setlength{\arraycolsep}{-0.3em}
\begin{eqnarray}
	&&\max_{\bm{u}\in \mathcal{U}(\bm{x})} \sum_{s\in \mathcal{S}} \pi_s \min_{\bm{y}_s \in \mathcal{Q}_s(\bm{u})}  \ \bm{\rho}^T\bm{l^c}_s \label{EQ1}  \\
	&&=\max \{(\bm{C}\bm{u})^T \bm{\theta}+\bm{h}^T \bm{\theta}+\sum_{s\in \mathcal{S}} \bm{\tilde{l}}_s^T\bm{\lambda}_s:\ \bm{u}\in \mathcal{U}(\bm{x}),  \label{EQ2}\\
	&&\bm{J}^T\bm{\vartheta}_s+\bm{A}^T\bm{\lambda}_s=\bm{0}, \bm{\lambda}_s\leq \pi_s \bm{\rho}, \bm{H}^T\bm{\vartheta}_s\leq \bm{0}, \bm{\vartheta}_s\geq \bm{0}, \forall s \}  \label{EQ3}
\end{eqnarray}
where $\bm{\vartheta}=\sum_{s\in \mathcal{S}} \bm{\vartheta}_s$, while $\bm{\lambda}_s$ and $\bm{\vartheta}_s$ denote the vectors of dual variables corresponding to \eqref{M:CONS1} and \eqref{M:CONS2} for each stochastic scenario $s$, respectively. 
\begin{remark}
	\label{rmk:3}
	Let $\Xi_s=\{\bm{J}^T\bm{\vartheta}_s+\bm{A}^T\bm{\lambda}_s=\bm{0}, \bm{\lambda}_s \leq \pi_s \bm{\rho}, \bm{H}^T\bm{\vartheta}_s \leq \bm{0}, \bm{\vartheta}_s \geq 0 \}$. Note that there exists an optimal solution for \eqref{EQ1} with $(\hat{\bm{\lambda}}_s,\hat{\bm{\vartheta}}_s)$ being an extreme point of $\Xi_s$. Unlike $\mathcal{U}(\bm{x})$, $\Xi_s$ is a fixed polyhedron independent of $\bm{u}$ and $\bm{x}$. Consequently, it has a finite and fixed set of extreme points. 
\end{remark}

For fixed $\bm{x}$ and $\bm{\theta}$, the worst-case scenario out of $\mathcal{U}(\bm{x})$ can be characterized as an optimal solution of the following mixed-integer program (MIP):
\setlength{\arraycolsep}{-0.3em}
\begin{eqnarray}
	&&\max \{(\bm{C}^T \bm{\theta})^T \bm{u}:\ \bm{u}\in \mathcal{U}(\bm{x}) \} \label{EQ4}
\end{eqnarray}
Let $\mathcal{U}'(\bm{x})$ be the continuous relaxation of $\mathcal{U}(\bm{x})$. If $\mathcal{U}(\bm{x})$ has the \emph{totally unimodular} property, we have that $\mathcal{U}'(\bm{x})$ equals the convex hull of $\mathcal{U}(\bm{x})$. Actually, due to the special structure of constraints in our $\mathcal{U}(\bm{x})$, this assumption generally holds. Under such a situation, the solution set of \eqref{EQ4} can be exactly defined by its KKT conditions for each $\bm{\theta}_n$:
\begin{eqnarray}
	&&\mathcal{OU}(\bm{x},\bm{\theta}_n)=\left\{ \quad
	\begin{array}{ccc}
		\rm{Eqs.} \ \eqref{M-DDU1}-\eqref{M-DDU2} \\
		\bm{F}^T \bm{\mu}+\bm{\gamma}\geq \bm{C}^T \bm{\theta}_n \\
		\bm{\mu}\circ (\bm{F} \bm{u}_n-\bm{f})=\bm{0}\\
		\bm{\gamma} \circ (\bm{u}_n-\bm{x})=\bm{0}\\
		\bm{u}_n \circ (\bm{F}^T \bm{\mu}+\bm{\gamma}-\bm{C}^T \bm{\theta}_n)=\bm{0}\\
		\bm{0}\leq \bm{u}_n\leq \bm{1}, \bm{\gamma}\leq \bm{0}, \bm{\mu}\leq 0
	\end{array}
	\quad \right\}\label{KKT}
\end{eqnarray}
where $\bm{\mu}$ and $\bm{\gamma}$ denote the dual variables of \eqref{M-DDU1}-\eqref{M-DDU2}. In \eqref{KKT}, Lines 1-2 are constraints that ensure the primal/dual feasibility. Lines 3-5 give the constraints of complementary slackness. Note that $\bm{a}\circ \bm{b}$ in \eqref{KKT} represents a Hadamard product \cite{zeng2022two}, i.e., yielding the element-wise product of vectors $\bm{a}$ and $\bm{b}$.

Let $\Omega_s=\{\bm{\vartheta}_s^1, \ldots, \bm{\vartheta}_s^{N'}\}$ denote the set of extreme points of $\Xi_s$ and let $\Omega$ be the Cartesian product of all $\Xi_s$, i.e., $\Omega= \Omega_1 \times \Omega_2 \times \ldots \times \Omega_{N_S}$. Then, we define $\Theta = \{ \bm{\theta}_1, \ldots, \bm{\theta}_{N}\}$ as the set whose elements are obtained by summing the $N_S$ components in each ordered tuple from $\Omega$.
Following \textbf{Remark \ref{rmk:3}}, by enumerating those KKT-based solution sets, we have the single-level form of $\mathbf{DDU-RSO}$, which yields the basis for the P-C\&CG algorithm.  
\setlength{\arraycolsep}{-0.3em}
\begin{eqnarray}
	&&\mathbf{DDU-RSO}*: \ \Gamma=\min_{\bm{x}\in \mathcal{X}} \ \varphi \label{SL:CONS0}\\
	&&\varphi\geq \sum_{s\in \mathcal{S}} \pi_s \bm{\rho}^T\bm{l}_s^{c,n}, \quad n=1,\ldots,N \label{SL:CONS1}\\
	&&\bm{u}_n\in \mathcal{OU}(\bm{x},\bm{\theta}_n), \quad n=1,\ldots,N \label{SL:CONS2}\\
	&&\bm{A}\bm{p}_s^n+\bm{l}_s^{c,n}=\bm{\tilde{l}}_s, \quad n=1,\ldots,N, \forall s \in \mathcal{S} \label{SL:CONS3}\\ 
	&&\bm{J}\bm{p}_s^n+\bm{H}\bm{v}_s^n\geq \bm{C}\bm{u}_n+\bm{h}, \quad n=1,\ldots,N, \forall s \in \mathcal{S} \label{SL:CONS4}\\ 
	&&\bm{u}_n\in [\bm{0},\bm{1}], \ \bm{p}_s^n, \bm{l}_s^{c,n}, \bm{v}_s^n \in \mathbb{R}^{e},n=1,\ldots,N, \forall s \in \mathcal{S} \label{SL:CONS5}
\end{eqnarray}

\subsection{Algorithm Designs of Parametric C\&CG}
The basic idea of C\&CG algorithm is to construct a master-subproblem computing framework, which iteratively generates new variables and constraints to strengthen the relaxation of the original problem. As shown in section III-A, $\mathbf{DDU-RSO}$ can be reformulated to a single-level form by enumerating the extreme points in $\Theta$. Hence, different from adding cutting planes with fixed critical scenarios in the basic C\&CG method, the P-C\&CG is to dynamically produce new extreme points and the corresponding optimality conditions characterizing non-trivial scenarios. 

Before proceeding to the complete procedures of P-C\&CG, we first provide the master and sub- problems as in the following.

\subsubsection{Subproblem}
For a given first-stage solution $\bm{\hat{x}}$, we define the subproblem ($\mathbf{SP}$) based on \eqref{EQ1}-\eqref{EQ3}.
\begin{eqnarray}
	&&\mathbf{SP}: \ \hat{\varphi}(\bm{x})=\max_{\bm{u}\in \mathcal{U}(\hat{\bm{x}})} \sum_{s\in \mathcal{S}} \pi_s \min_{\bm{y}_s \in \mathcal{Q}_s(\bm{u})}  \ \bm{\rho}^T\bm{l^c}_s	 \label{SP}	   	
\end{eqnarray}

The objective is to minimize the load shedding under the worst-case scenario. The bilevel problem $\mathbf{SP}$ can be computed by transforming the inner-level problem into its dual form, which is given as \eqref{EQ2}-\eqref{EQ3}. Due to the load shedding variables $\bm{l^c}_s$ in the power balance equations, $\mathbf{SP}$ holds the relatively complete resource property. It guarantees that the optimal solution of $\mathbf{SP}$ exists for any given $\hat{\bm{x}}$, which corresponds to an extreme point, i.e., $(\hat{\bm{\lambda}},\hat{\bm{\theta}})$. The dual solution would yield a set of optimality condition cuts as in \eqref{KKT}.

\subsubsection{Master Problem}
Then, the optimality cutting set will be created and added to the master problem ($\mathbf{MP}$).
\begin{eqnarray}
	&&\mathbf{MP}_l: \ 
	\Gamma=\min_{\bm{x}\in \mathcal{X}} \ \varphi \label{MP:CONS0} \\
	&&\varphi\geq \sum_{s\in \mathcal{S}} \pi_s \bm{\rho}^T\bm{l}_s^{c,n}, \quad n=1,\ldots,l-1 \label{MP:CONS1}\\
	&&\bm{u}_n\in \mathcal{OU}(\bm{x},\hat{\bm{\theta}}_n), \quad n=1,\ldots,l-1 \label{MP:CONS2}\\
	&&\bm{A}\bm{p}_s^n+\bm{l}_s^{c,n}=\bm{\tilde{l}}_s, \quad n=1,\ldots,l-1, \forall s \in \mathcal{S} \label{MP:CONS3}\\ 
	&&\bm{J}\bm{p}_s^n+\bm{H}\bm{v}_s^n\geq \bm{C}\bm{u}_n+\bm{h}, \quad n=1,...,l-1, \forall s \in \mathcal{S} \label{MP:CONS4}\\ 
	&&\bm{u}_n\in [\bm{0},\bm{1}], \ \bm{p}_s^n, \bm{l}_s^{c,n}, \bm{v}_s^n \in \mathbb{R}^{e},n=1,...,l-1,, \forall s \in \mathcal{S} \label{MP:CONS5}
\end{eqnarray}
where $l$ represents the index of iteration. Note that $\mathbf{MP}_l$ is a relaxation of $\mathbf{DDU-RSO}*$. By expanding a partial enumeration of $\Theta$ and corresponding non-trivial scenario set gradually, the lower bound can be improved iteratively till the convergence condition is satisfied.

\subsubsection{Algorithm Procedures}
Based on well-defined $\mathbf{MP}$ and $\mathbf{SP}$, the customized algorithm procedures are given as below.

\noindent $\blacksquare$ \textbf{Algorithm 1: P-C\&CG}
\begin{description}
	\item [$\rm{Step \ 1}$] $\ $ Initialize $UB=\infty$, $LB=-\infty$, $l=1$ and ${\Theta}'=\emptyset$.
	\item [$\rm{Step \ 2}$] $\ $ Solve the master problem $\mathbf{MP}_l$; Derive the solution $\hat{\bm{x}}^l$ and its optimal value $\hat{\Gamma}$. Update $LB = \hat{\Gamma}$.
	\item [$\rm{Step \ 3}$] $\ $ For $\hat{\bm{x}}^{l}$, solve the subproblem $\mathbf{SP}_l$,
	to attain the critical contingency scenario $\hat{\bm{u}}_{l}$, corresponding extreme point $(\hat{\bm{\lambda}}_l,\hat{\bm{\theta}}_l)$ and optimal value $\hat{\varphi}(\hat{\bm{x}}^{l})$.
	\item [$\rm{Step \ 4}$] $\ $ Update ${\Theta}'={\Theta}'\cup \{\hat{\bm{\theta}}_l\}$, and strengthen $\mathbf{MP}_l$ by creating and adding the following cutting set:
	\setlength{\arraycolsep}{-0.4em}
	\begin{eqnarray}
		&&\varphi\geq \sum_{s\in \mathcal{S}} \pi_s \bm{\rho}^T\bm{l}_s^{c,l}  \label{Cut1}  \\ 
		&&\bm{u}_l\in \mathcal{OU}(\bm{x},\hat{\bm{\theta}}_l) \label{Cut2}  \\ 
		&&\bm{A}\bm{p}_s^l+\bm{l}_s^{c,l}=\bm{\tilde{l}}_s, \quad \forall s \in \cal S  \label{Cut3}  \\ 
		&&\bm{J}\bm{p}_s^l+\bm{H}\bm{v}_s^l\geq \bm{C}\bm{u}_l+\bm{h}, \quad \forall s \in \cal S \label{Cut4}\\ 
		&&\bm{u}_l\in [\bm{0},\bm{1}], \ \bm{p}_s^l, \bm{l}_s^{c,l}, \bm{v}_s^l \in \mathbb{R}^{e},  \forall s \in \cal S \label{Cut5} 
	\end{eqnarray}
	\item [$\rm{Step \ 5}$] $\ $ Update $UB$ = min$\left\lbrace UB,\hat{\varphi}(\hat{\bm{x}}^{l})\right\rbrace$ .
	\item [$\rm{Step \ 6}$] $\ $ If $\frac{UB-LB}{LB}\leq \varepsilon$, terminate and report the hardening solution. Otherwise, set $l\gets {l+1}$ and go ot $\rm{Step \ 2}$.
\end{description}

\begin{remark}
\label{rmk:4}
Notice again that $\Theta$ is a fixed and finite set, so the convergence property of P-C\&CG can be easily deduced. Moreover, if an optimal solution $x^*$ to $\mathbf{MP}$ exists in two iterations, the algorithm converges in the latter iteration. Hence, together with the fact that the number of candidate hardening schemes is finite, the cardinality of $\mathcal{X}$ would further bound the number of iterations. 
\end{remark}

\subsection{Comparative Analysis with the Basic C\&CG}
As mentioned in \textbf{Remark \ref{rmk:1}}, $\mathbf{DDU-RSO}$ can be reformulated as a DIU-based model. Specifically, different from the DDU set in \eqref{DDU:CONS1}-\eqref{DDU:CONS4}, the DIU set $\mathcal{U}^0$ only contains the $N-k$ constraint.  Then, we can make use of $\bm{u}'=\bm{u}+\bm{x}-\bm{u}^*\circ\bm{x}$ to modify the contingency status in the second-stage recourse problem, leading to an equivalent DIU-based formulation. It can be solved by the basic C\&CG method, through adding the following cutting sets iteratively: 
\begin{eqnarray}
	&&\varphi\geq \sum_{s\in \mathcal{S}} \pi_s \bm{\rho}^T\bm{l}_s^{c,l}  \label{DIU-Cut1}  \\ 
	&&\bm{A}\bm{p}_s^l+\bm{l}_s^{c,l}=\bm{\tilde{l}}_s, \quad \forall s \in \cal S   \label{DIU-Cut2}  \\ 
	&&\bm{J}\bm{p}_s^l+\bm{H}\bm{v}_s^l\geq \bm{C}\bm{u}_l+\bm{h}, \quad \forall s \in \cal S \label{DIU-Cut3}\\ 
	&&\bm{u}'_l=\bm{u}_l^*+\bm{x}-\bm{u}^*_l \circ \bm{x} \label{DIU-Cut4}\\ 
	&&\bm{u}'_l\in \{\bm{0},\bm{1}\}, \ \bm{p}_s^l, \bm{l}_s^{c,l}, \bm{v}_s^l \in \mathbb{R}^{e},  \forall s \in \cal S \label{DIU-Cut5} 
\end{eqnarray}
where $\bm{u}^*_l$ denotes the worst-case line status of \textbf{SP} in iteration $l$. Between the P-C\&CG and the basic C\&CG, the following results highlight the advantage of the former one. 
\begin{lem}
\label{lem:1}
For the inner-most problem of $\mathbf{DDU-RSO}$, the more lines and DGs are damaged, the larger the optimal value is. 
\end{lem}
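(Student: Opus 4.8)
The plan is to prove \emph{monotonicity of the recourse optimal value} with respect to the contingency vector $\bm{u}$. Recalling that $\omega_{ij}=0$/$\nu_j=0$ marks a damaged line/DG, ``more damage'' is the componentwise relation $\bm{u}^2\le\bm{u}^1$ (the state $\bm{u}^2$ has at least as many zeros). Fixing a scenario $s$ and writing $v_s(\bm{u})=\min_{\bm{y}_s\in\mathcal{Q}_s(\bm{u})}\bm{\rho}^T\bm{l}^c_s$, I would show $\bm{u}^2\le\bm{u}^1\Rightarrow v_s(\bm{u}^2)\ge v_s(\bm{u}^1)$. Because $\pi_s\ge 0$, this per-scenario inequality carries over to the innermost expected-shedding value $\sum_{s\in\mathcal{S}}\pi_s v_s(\bm{u})$, which is the ``optimal value'' in the statement.

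The core is a feasibility-mapping argument. Let $\hat{\bm{y}}_s$ attain $v_s(\bm{u}^2)$. From it I would build a point feasible for the \emph{less}-damaged set $\mathcal{Q}_s(\bm{u}^1)$ with no larger shedding, which forces $v_s(\bm{u}^1)\le v_s(\bm{u}^2)$. The construction keeps the generations, storage discharges, branch flows and shedding of $\hat{\bm{y}}_s$ and only readjusts the nodal voltages. Checking the constraint families: the power balances \eqref{RO:CONS1}--\eqref{RO:CONS2}, the shedding limits \eqref{RO:CONS3}--\eqref{RO:CONS4}, the ESS limits \eqref{RO:CONS15-1}--\eqref{RO:CONS16} and the voltage box \eqref{RO:CONS10} do not contain $\bm{u}$ and are inherited verbatim. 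For the bound-type families \eqref{RO:CONS5}--\eqref{RO:CONS8}, a component keeping its status keeps identical bounds, while a \emph{restored} component ($0\to 1$) had its output/flow pinned to zero in $\hat{\bm{y}}_s$, and zero lies in the reactivated interval whenever a unit may be switched off and a line may carry zero flow (the standard curtailable setting $\underline{G}_j=0$ and $\tan\underline{\theta}_j\le 0\le\tan\overline{\theta}_j$). Hence the kept block of $\hat{\bm{y}}_s$ satisfies every such constraint at $\bm{u}^1$, and the objective $\bm{\rho}^T\bm{l}^c_s$ is unchanged.

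The delicate --- and, I expect, decisive --- step is the voltage-coupling family \eqref{RO:CONS9}. This is \emph{not} a set inclusion: a line down in $\bm{u}^2$ but up in $\bm{u}^1$ carries zero flow in $\hat{\bm{y}}_s$ yet now reactivates its drop equation $U_i^s=U_j^s$, which the inherited voltages may violate. The crux is to exhibit a feasible voltage profile for the kept flows. I would use the tree-spanning topology: the operational network at $\bm{u}^1$ is a subforest of the same spanning tree, so the branch flows determine every voltage difference by propagation from the reference $U_0$, and a profile satisfying all reactivated equations in \eqref{RO:CONS9} exists; it then remains to place it inside $[\underline{U}_j,\overline{U}_j]$ --- equivalently, one may argue directly on the linearized relaxation of Remark~\ref{rmk:1} (which is free of the Big-$M$ constant), where $\bm{u}$ enters only through the right-hand sides and the voltages are decoupled from the shedding. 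Since the load-shedding objective depends only on the generation/flow/shedding block, this readjustment leaves $\bm{\rho}^T\bm{l}^c_s$ untouched, completing $v_s(\bm{u}^1)\le v_s(\bm{u}^2)$. The main obstacle is thus precisely this voltage-feasibility reconstruction; a secondary caveat is the DG lower bound in \eqref{RO:CONS5}, where a restored must-run unit ($\underline{G}_j>0$) would need its surplus absorbed, so the clean statement presumes curtailable generation.
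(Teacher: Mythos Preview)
Your argument follows the same essential idea as the paper's --- monotonicity of the recourse feasible set with respect to damage --- but is considerably more careful. The paper's proof is three sentences: it observes that damaging a line or DG collapses the box constraints \eqref{RO:CONS5}--\eqref{RO:CONS8} to the origin, asserts that ``the greater the number of damaged lines and DGs, the smaller the feasible region is,'' and concludes directly. It does not mention the voltage-coupling family \eqref{RO:CONS9} at all.

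You correctly identify that \eqref{RO:CONS9} is the delicate point: a restored line reactivates an equality that the inherited voltages need not satisfy, so the feasible regions are \emph{not} simply nested in the naive sense, and some voltage reconstruction is required to complete the feasibility-mapping argument. Your secondary caveat on a strictly positive DG lower bound $\underline{G}_j>0$ is likewise a genuine edge case the paper does not address. In short, the paper's proof is the ``intuitive'' version of your argument with these subtleties suppressed; your treatment is the more rigorous one, and your diagnosis of where the real work lies is accurate --- though you also leave the voltage-bound feasibility as a sketch rather than a full verification.
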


\begin{proof}
The inner-most problem of $\mathbf{DDU-RSO}$ is a linear program, whose objective function is to minimize the total load shedding of the system. Line $(i,j)$ being damaged means $u_{ij}=0$, causing the upper and lower bound of the constraints \eqref{RO:CONS7} and \eqref{RO:CONS8} to be zero. Similarly, when the DGs deployed at node $j$ are destroyed, their available generation capacity becomes zero, as shown in \eqref{RO:CONS5} and \eqref{RO:CONS6}.
Hence, the greater the number of damaged lines and DGs, the smaller the feasible region is, and hence the larger the optimal objective value is. 
\end{proof}

\begin{prop}
\label{prop:1}
Consider iteration $l$ and denote the given $\mathbf{MP}$ by $\mathbf{MP}_l$. After $\mathbf{SP}$ is solved, the parametric cutting set in the form of \eqref{Cut1}-\eqref{Cut5} or the basic one in the form of \eqref{DIU-Cut1}-\eqref{DIU-Cut5} can be appended to $\mathbf{MP}_l$ to obtained $\mathbf{MP}_{l+1}$. $\mathbf{MP}_{l+1}$ with the parametric cutting set is stronger, i.e., the optimal value  of $\mathbf{MP}_{l+1}$ with  the parametric one is larger than or equal to that of $\mathbf{MP}_{l+1}$ obtained by augmenting $\mathbf{MP}_l$ with the basic one.  
\end{prop}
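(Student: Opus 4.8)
The plan is to compare the two master problems $\mathbf{MP}_{l+1}$ obtained by appending, respectively, the parametric cutting set \eqref{Cut1}--\eqref{Cut5} and the basic cutting set \eqref{DIU-Cut1}--\eqref{DIU-Cut5} to the same $\mathbf{MP}_l$. Since both augmented problems share the identical objective $\min_{\bm{x}\in\mathcal{X}}\varphi$ and the identical set of inherited constraints from $\mathbf{MP}_l$, the comparison reduces entirely to comparing the feasible regions carved out by the two new cutting sets (as functions of the first-stage decision $\bm{x}$). The claim that the parametric version is \emph{stronger} is equivalent to showing that the parametric cut imposes a lower bound on $\varphi$ that is at least as large, for every admissible $\bm{x}$, as the bound imposed by the basic cut. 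Thus the first step I would take is to recast ``$\mathbf{MP}_{l+1}$ is stronger'' as the pointwise inequality between the two induced value functions of $\varphi$ over $\bm{x}\in\mathcal{X}$, so that a larger feasible lower bound on $\varphi$ translates directly into a larger (or equal) optimal value of the minimization.

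Second, I would examine what each cut actually enforces on the recourse cost for a fixed $\bm{x}$. The basic cut \eqref{DIU-Cut1}--\eqref{DIU-Cut5} uses the \emph{fixed} worst-case status $\bm{u}_l^*$ and then applies the algebraic correction $\bm{u}'_l=\bm{u}_l^*+\bm{x}-\bm{u}_l^*\circ\bm{x}$ to restore the lines/DGs that $\bm{x}$ hardens; the contingency vector used in the recourse constraints is this \emph{frozen} $\bm{u}'_l$, which depends on $\bm{x}$ only through the componentwise repair formula. By contrast, the parametric cut embeds $\bm{u}_l$ through the KKT-based set $\mathcal{OU}(\bm{x},\hat{\bm{\theta}}_l)$ in \eqref{Cut2}, so that $\bm{u}_l$ is re-optimized as the \emph{true} worst case over $\mathcal{U}(\bm{x})$ against the fixed dual direction $\hat{\bm{\theta}}_l$ as $\bm{x}$ varies. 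The key structural observation is that for the current point $\hat{\bm{x}}^l$ both cuts certify the same value $\hat{\varphi}(\hat{\bm{x}}^l)$, but once $\bm{x}$ moves, the parametric $\bm{u}_l$ selected by $\mathcal{OU}(\bm{x},\hat{\bm{\theta}}_l)$ is the worst-case contingency \emph{consistent with the actual DDU set} $\mathcal{U}(\bm{x})$, whereas the basic repair $\bm{u}'_l$ may describe a contingency with \emph{strictly fewer} damaged components than the genuine worst case the DDU set admits.

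Third, this is exactly where Lemma~\ref{lem:1} does the heavy lifting: more damaged lines and DGs yield a weakly larger innermost optimal value. Hence I would argue that the contingency realized by the parametric cut has at least as many damaged components (in the worst-case sense against $\hat{\bm{\theta}}_l$) as the one realized by the basic repair formula, so by Lemma~\ref{lem:1} the recourse cost forced by \eqref{Cut1}--\eqref{Cut5} is at least that forced by \eqref{DIU-Cut1}--\eqref{DIU-Cut5}, giving a weakly larger lower bound on $\varphi$ and therefore a weakly larger optimal value of $\mathbf{MP}_{l+1}$. I expect the \textbf{main obstacle} to lie in making the comparison ``the parametric contingency is at least as damaging as the basic one, for every $\bm{x}$'' fully rigorous: the basic cut freezes $\bm{u}_l^*$ and only repairs hardened components, which means for an $\bm{x}$ that hardens items \emph{not} damaged in $\bm{u}_l^*$ the repair does nothing, whereas $\mathcal{OU}(\bm{x},\hat{\bm{\theta}}_l)$ is free to shift the remaining budget $k^L,k^G$ onto still-vulnerable components and thereby select a genuinely worse scenario. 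Pinning down that the parametric optimizer always exhausts the contingency budget in the most damaging admissible way — and that this dominates the frozen-and-repaired vector componentwise in the sense needed by Lemma~\ref{lem:1} — is the delicate part; I would handle it by invoking the optimality of $\bm{u}_l$ in \eqref{EQ4} for the linearized (totally unimodular) $\mathcal{U}'(\bm{x})$ and comparing the attained objective $(\bm{C}^T\hat{\bm{\theta}}_l)^T\bm{u}_l$ against the value at the feasible but generally suboptimal point $\bm{u}'_l$.
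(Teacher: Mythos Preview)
Your overall strategy---reduce the comparison to showing that, for every $\bm{x}\in\mathcal{X}$, the lower bound on $\varphi$ imposed by the parametric cut dominates that of the basic cut, and then invoke Lemma~\ref{lem:1}---is exactly the route the paper takes. You also correctly isolate the crux: one must show that the parametric contingency damages a \emph{superset} of the components damaged by the repaired vector $\bm{u}'_l$.

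The gap is in your proposed resolution. Comparing the scalar values $(\bm{C}^T\hat{\bm{\theta}}_l)^T\bm{u}_l$ and $(\bm{C}^T\hat{\bm{\theta}}_l)^T\bm{u}'_l$ only yields a one-dimensional inequality; it does \emph{not} deliver the componentwise (subset-of-damaged-components) relation that Lemma~\ref{lem:1} needs. Worse, in $\mathbf{MP}_{l+1}$ the variable $\bm{u}_l$ is selected by the \emph{minimizer}, so you must show that \emph{every} $\bm{u}_l\in\mathcal{OU}(\bm{x},\hat{\bm{\theta}}_l)$ damages a superset of the components damaged in $\bm{u}'_l$, not just that some optimizer of \eqref{EQ4} does. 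Your scalar comparison is enough to conclude that the parametric optimizer exhausts the budget $k^L$, but not enough to pin down \emph{which} components it damages.

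The paper closes this gap differently. It first splits into the line part $\mathcal{OU}(\bm{z},\bm{\theta}^{\bm{\omega}})$ and the DG part $\mathcal{OU}(\bm{r},\bm{\theta}^{\bm{\nu}})$, then runs a two-case analysis on whether the hardened lines in $\bm{z}$ overlap the zero entries of $\bm{\omega}^*_l$. In the non-overlap case the two cuts coincide. In the overlap case, $\bm{u}'_l$ damages only the $k^L-k'$ unhardened lines of $\bm{\omega}^*_l$, whereas---using a big-$M$ device $\tilde{\bm{\theta}}^{\bm{\omega}}_l$ that forces the coefficients of the lines in $\bm{\omega}^*_l$ to be strictly the most negative---\emph{every} element of $\mathcal{OU}(\bm{z},\tilde{\bm{\theta}}^{\bm{\omega}}_l)$ must damage those same $k^L-k'$ lines plus $k'$ additional unhardened ones, giving the required superset inclusion and hence Lemma~\ref{lem:1}. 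That structural argument (or an equivalent) is the missing ingredient in your plan.
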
                                                 

\begin{proof}

Note that \eqref{EQ4} can be decomposed into two independent parts, i.e., $\mathcal{OU}(\bm{z},\bm{\theta^\bm{\omega}})$ and $\mathcal{OU}(\bm{r},\bm{\theta^\bm{\nu}})$ with $\bm x=(\bm z,\bm r)$ and $\bm \theta=(\bm \theta^\omega,\bm\theta^{\nu})$. For the sake of illustration, we first analyze the properties of $\mathcal{OU}(\bm{z},\bm{\theta^\bm{\omega}})$.	
	
According to the duality theory, we have $\bm{C}^T \bm{\theta^\bm{\omega}}=-(\overline{\bm{P}} \circ (\bm{\pi}^{p,u}+\bm{\pi}^{p,l})+\overline{\bm{Q}}\circ (\bm{\pi}^{q,u}+\bm{\pi}^{q,l}))$, where $\bm{\pi}^{p,u}$, $\bm{\pi}^{p,l}$, $\bm{\pi}^{q,u}$, and $\bm{\pi}^{q,l}$ are non-negative dual variables of \eqref{RO:CONS7} and \eqref{RO:CONS8}, respectively. When $\mathbf{SP}$ derives its optimal $\bm{\omega}^*_l$, constraint \eqref{DDU:CONS1} is satisfied with equality, meaning that $k^L$ components of vector $\bm{\omega}^*_l$ are equal to 0 while the others are set to 1. 

Now, we can define $\tilde{\bm{\theta}}^{\bm{\omega}}_l$ such that $(\bm{C}^T \tilde{\bm{\theta}}^{\bm{\omega}}_l)^T$ equals $-M$ with \textit{M} being a sufficiently positive value whenever $\bm{\omega}^*_{ij}=0$. As noted in \cite{zeng2022two}, this modification ensures that $\mathcal{OU}(\bm{z},\tilde{\bm{\theta}}^{\bm{\omega}}_l)$ is a singleton that only contains  $\bm{\omega}^*_l$, as long as the set of lines being hardened  does not overlap the lines corresponding to zero entries of $\bm{\omega}^*_l$. 

We consider the following two cases of $\textbf{MP}_{l+1}$ where it is with cutting set \eqref{DIU-Cut1}-\eqref{DIU-Cut5} and \eqref{Cut1}-\eqref{Cut5}, respectively. 
\textbf{(Case I)} If the hardened lines of $\bm z$ and zero entries of $\bm \omega^*_l$ do not overlap, it can be seen that $\mathcal{OU}(\bm{z},\tilde{\bm{\theta}}^{\bm{\omega}}_l)=\{\bm \omega^*_l\}$, and the parametric cutting set, i.e., \eqref{Cut1}-\eqref{Cut5}, defined with respect to $\mathcal{OU}(\bm{z},\tilde{\bm{\theta}}^{\bm{\omega}}_l)$  and \eqref{DIU-Cut1}-\eqref{DIU-Cut5} defined on $\bm{\omega}^*_l$ are identical. 

(Case II) If $k'$ of the $k^L$ damaged lines in $\bm{\omega}^*_l$ are hardened, let $\hat{\bm z}'$ be the set of those lines. According to \eqref{DIU-Cut4}, only the remaining lines in $\bm{\omega}^*_l\backslash \hat{\bm z}'$ are damaged. However, given the structure of linear constraints \eqref{DDU:CONS1} and \eqref{DDU:CONS2} and the linear objective function $(\bm{C}^T \tilde{\bm{\theta}}^{\bm{\omega}}_l)^T$, the worst-case scenario derived from the $\mathcal{OU}(\bm{z},\tilde{\bm{\theta}}^{\bm{\omega}}_l)$ contains two parts. The first part is those in $\bm{\omega}^*_l\backslash \hat{\bm z}'$. The second part contains additional $k'$ lines that are not hardened.

Given that  $\bm{\omega}^*_l\backslash \hat{\bm z}'$ is a subset of $\mathcal{OU}(\bm{z},\tilde{\bm{\theta}}^{\bm{\omega}}_l)$,   according to \textbf{Lemma 1}, it can be seen that the recourse problem's optimal value for the parametric cutting set \eqref{Cut1}-\eqref{Cut5} is larger than or equal to that of \eqref{DIU-Cut1}-\eqref{DIU-Cut5}. 

Note that the structure and property of \eqref{DDU:CONS3}-\eqref{DDU:CONS4} are the same as \eqref{DDU:CONS1}-\eqref{DDU:CONS2}.  So, the same conclusion can be made for $\mathcal{OU}(\bm{r},\tilde{\bm{\theta}}^{\bm{\nu}}_l)$ by following a similar approach.
  
Consequently, the optimal value of $\mathbf{MP}_{l+1}$ in the P-C\&CG framework is larger than or equal to that of $\mathbf{MP}_{l+1}$ obtained with the basic cutting set.
\end{proof}

\subsection{Enhancement Strategy Based on Structural Information}
According to the proof for \textbf{Proposition 1}, we have noted that \eqref{M-DDU1}-\eqref{M-DDU2} often has multiple optimal solutions, rendering $\mathcal{OU}$ a non-singleton set. With this issue, the performance of P-C\&CG is less consistent, since extra iterations could be involved with a longer computational time. To handle this challenge, we introduce the concept of \textit{resilience importance indices} by fully leveraging the structural property of PDN. 
For the sake of illustration, the resilience importance index of line $(i,j)$ is denoted by $\xi_{ij}$. It captures the impact of a single outage of line $(i,j)$ on network resilience. Since our formulation cares about the load shedding, $\xi_{ij}$ is evaluated as the load shedding losses after disconnecting every line $(i,j)$. Specifically, given a damage scenario $\bm{\omega}$, where $\omega_{ij}=0$ and other elements fixed to 1, the value of $\xi_{ij}$ can be derived by solving the lower-level problem $\sum_{s\in \mathcal{S}} \pi_s \min_{\bm{y}_s \in \mathcal{Q}_s(\bm{u})} \bm{\rho}^T\bm{l^c}_s$ under this scenario.
Clearly, such an index contains the prior knowledge on the importance of different power distribution lines.

Let $\bm{\xi}$ be the vector of resilience importance indices. We next slightly modify \eqref{EQ4} using $\bm{\xi}$.   
\begin{eqnarray}
	&&\max \{(\bm{C}\bm{u})^T \bm{\theta} + \epsilon \cdot \bm{\xi}^T \bm{u}:\ \bm{u}\in \mathcal{U}(\bm{x}) \} \label{EQ5}
\end{eqnarray}
where $\epsilon$ is a very small negative number. For the same hardening decision $\bm{x}$, under the adjustment of resilience importance indices $\bm{\xi}$, the optimal solution reduces to a singleton or a much small subset of the original optimal solution set, which could result in larger load shedding due to the outage. Consequently, the utilization of resilience importance indices provides an effective way to identify the lines that are more adversarial to hardening, and thus significantly improve the computational efficiency compared to the standard  P-C\&CG.
Note that our intention to single out an optimal solution subset that is more likely to be against hardening, rather than directing \eqref{EQ4} to totally different optimal solutions.

\begin{remark}
\label{rmk:5}	
We would like to mention that our utilization of resilience importance indices produces an effect similar to the well-known Pareto-optimality adopted in classical Benders decomposition algorithm, i.e., identifying an optimal solution (or optimal solution subset) that more adversarial to hardening. Note that it is also achieved by the deep knowledge on the underlying power distribution system. By employing the corresponding $\mathcal{OU}$ set, we can effectively deal with the multi-solution issues and thus significantly improve the computational efficiency.
\end{remark}

\section{Numerical Results}
\label{Numerical}
The effectiveness and computational feasibility of DDU-RSO based resilient hardening method are validated on 33-bus and 118-bus test distribution systems. We consider the system hardening decisions against a hurricane and its cascading effect of regional flooding. A number of overhead power lines are buried underground to prevent the strike of strong winds and rain storms. Also, the installation position of DGs is evaluated to protect them from floodwaters. For simplicity, the budget constraint in \eqref{FS} is replaced by the upper restriction on total number of hardened lines and DGs, which is represented by a parameter $\overline{\Upsilon}$. Additionally, the typical scenarios of power loads and initial storage levels are generated using the stratified sampling techniques (for more details can be seen in our previous studies \cite{sun2022multistage,cao2021risk,cao2021hydrogen}). 

Our tests are performed in MATLAB R2024b on a desktop computer with Intel Core i9-10900K 3.70GHz processor and 128GB memory. All the $\mathbf{MP}$s and $\mathbf{SP}$s are computed using Gurobi 11.0.

\subsection{33-Bus Test Distribution Network}
Fig. \ref{fig:PND} presents a modified structure of IEEE 33-bus distribution network, where 5 DGs and 3 ESS units are configured with identical power rating as 500kW. The nominal storage capacity of each ESS unit is set to 1500 kWh, with a random fluctuation range of $\pm$20\%. The DGs installed at N4, N11, N14, N18 and N33 are indexed from DG1 to DG5.  The resilient hardening decisions on 33-bus system are exhibited and analyzed as in the following.  

\begin{figure}[!t]
	\centering
	\includegraphics[width=3.5in]{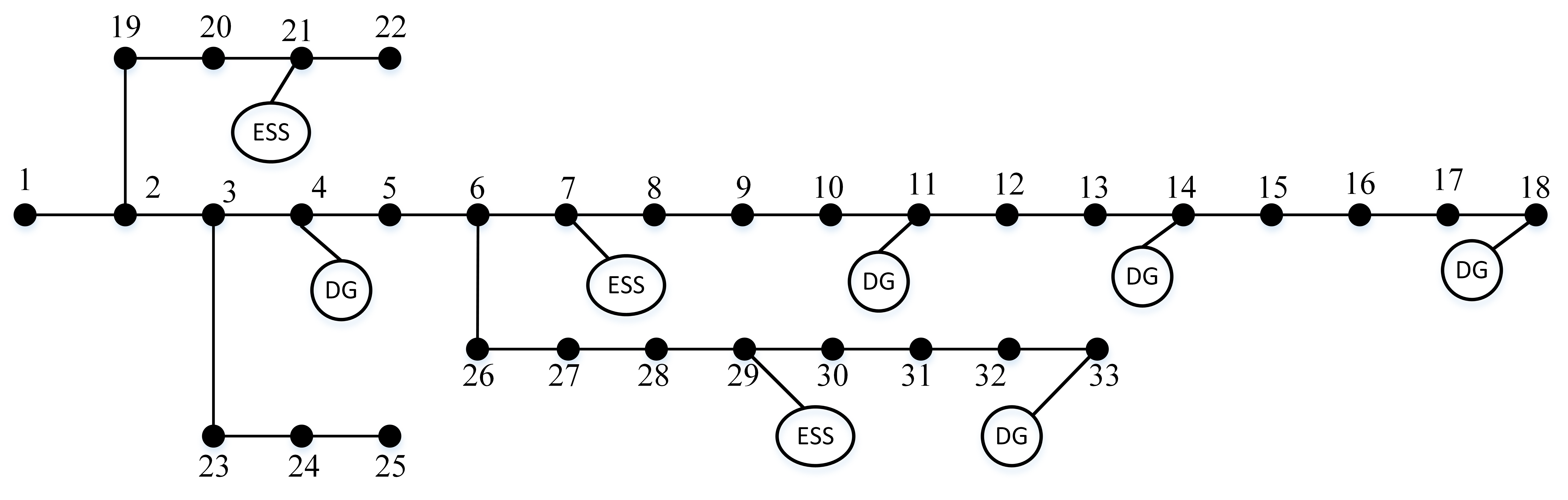}
	\caption{Demonstration of 33-bus Power Distribution Network} 
	\label{fig:PND} 
\end{figure}

\subsubsection{Resilient Hardening Results Under Different Budget Levels} 
First, the $\mathbf{DDU-RSO}$ model is computed to obtain the network hardening schemes under different budget levels. We set up four cases where  $\overline{\Upsilon}=$ 0, 2, 4, 6. In our tests, the $k^L$ and $k^{DG}$ are considered as 5 and 1, indicating no more than 5 distribution lines and 1 DG will be damaged under catastrophic occasions. The hardening results with associating worst-case contingency scenarios are presented in Fig. \ref{fig:HS}. 

\begin{figure*}[!t]
	\centering
	\includegraphics[width=5.0in]{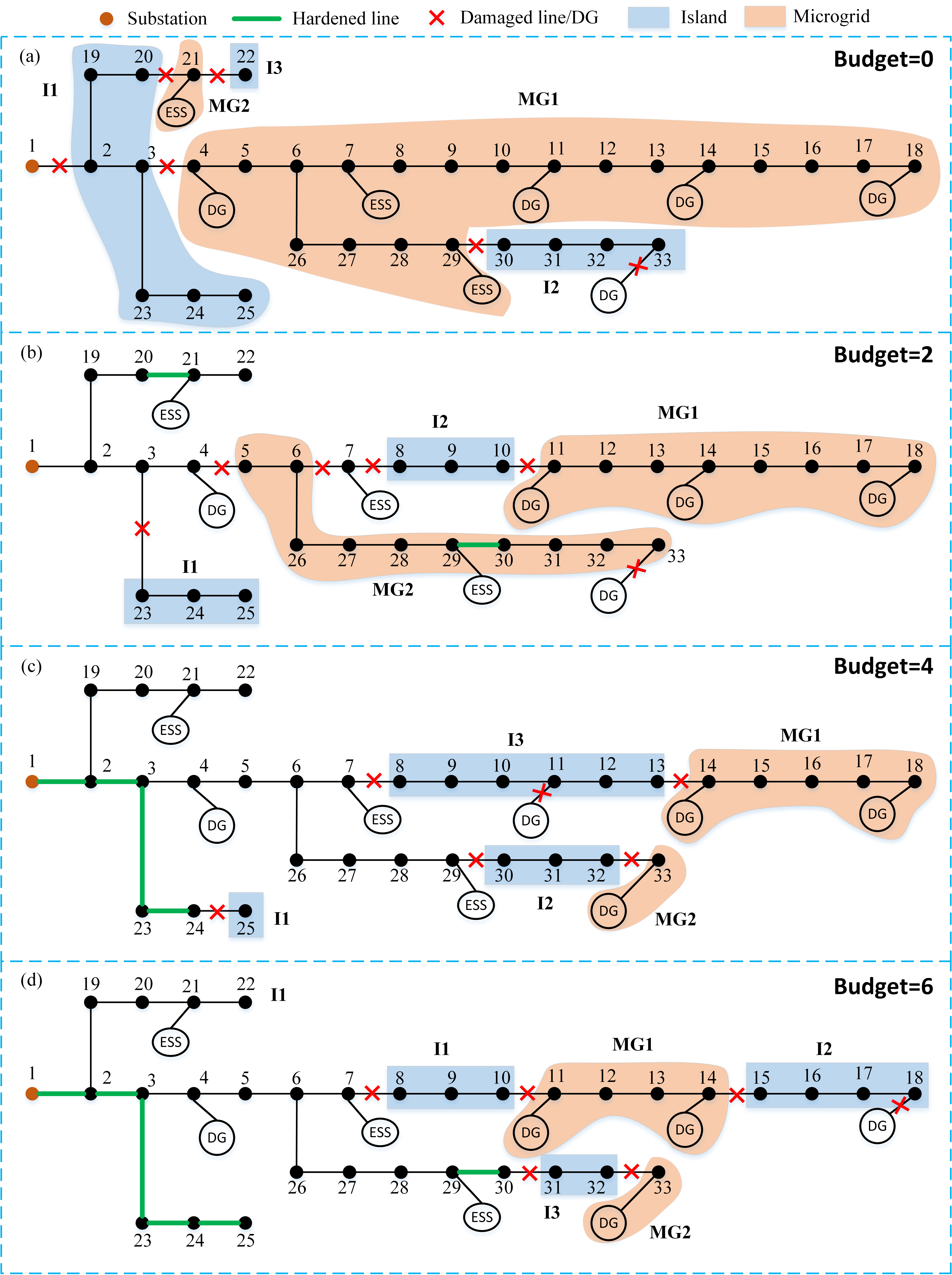}
	\caption{Resilient Hardening Results under Different Budget Levels ($\overline{\Upsilon}$): 33-Bus Test System} 
	\label{fig:HS} 
\end{figure*}

\textit{$\bullet$ Case 0} ($\overline{\Upsilon}=0$):
With no hardening actions, the $\mathbf{DDU-RSO}$ model in \eqref{M:OBJ}-\eqref{M:CONS2} reduces to a DIU-based RSO problem. In this baseline case, the most serious contingency occurs when the attack of wind storm cuts off the distribution lines L1-2, L3-4, L20-21, L21-22, L29-30, while the cascading flood forces DG5 to be shut down. As shown in Fig. \ref{fig:HS}(a), the entire system is unintentionally partitioned into 5 parts, including three islands without any power sources, as well as two dynamic microgrids that preserve the critical loads supply using DGs and ESSs. The catastrophic consequence is the load shedding of more than 54\%.

\textit{$\bullet$ Case 1} ($\overline{\Upsilon}=2$):
In this case, the distribution lines L20-21 and L29-30 are hardened. So they would not be damaged as in \emph{Case 0}. Alternatively, the worst-case contingency scenario is identified as the outages of L4-5, L6-7, L7-8, L10-11, L3-23 and DG5. As a result, the islanding areas shrink from 12 electric nodes to 6, which thus decreases the ratio of load shedding from 54.1\% to 48.6\%. According to Fig. \ref{fig:HS}(b), the protection of L20-21 seems rather effective. Together with the integration of ESS unit at N21, the damage on either of the line sections among L1-2-19-20 would not incur islanding.

\textit{$\bullet$ Case 2} ($\overline{\Upsilon}=4$):
When the hardening budget covers four distribution lines, the protection of L21-21 and L29-30 are strategically abandoned. Instead, the lines L1-2, L2-3, L3-23 and L23-24 are hardened. As presented in Fig. \ref{fig:HS}(c), the worst-case damage scheme is significantly changed, i.e., the outages on L13-14, L24-25, L29-30, L32-33 and DG2 replace those on L4-5, L6-7, L10-11, L3-23 and DG5. 
The rationality of protecting L3-23 and L23-23 can be explained as the high power demands at nodes N23 and N24, which is more valuable than the reinforcement of L29-30. Also, more power demands can be supplied via the distribution substation (N1), while the self-energized regions (i.e., the microgrids) are largely reduced.  
As a result, and the ratio of load shedding further reduces to 38.3\%.

\textit{$\bullet$ Case 3} ($\overline{\Upsilon}=6$):
When $\overline{\Upsilon}$ increases from 4 to 6, two additional distribution lines L24-25 and L29-30 are hardened, on the basis of \emph{Case 2}. As shown in Fig. \ref{fig:HS}(d), the worst-case contingency scenario is identified as the outages of  L7-8, L10-11, L14-15, L30-31, L32-33 and DG4. In contrast to \emph{Case 2}, the further protection of L29-30 preserves the power supply of N30 under the worst-case scenario. On the other hand, the protection of L24-25 forces the change of damage distribution between N10 and N18, which may cause less amount of load losses. The resulting load shedding proportion decreases from 38.3\% to 25.8\%. Compared to the baseline case, we observe a significant improvement in system resilience, as the load shedding is reduced by more than a half.   

The aforementioned test results validate the effectiveness of the proposed hardening method. Our $\mathbf{DDU-RSO}$ model can capture the practical correlation of proactive hardening decisions with the endogenous uncertainty of system contingencies. With the increasing strength of hardening actions, the catastrophic results brought by disastrous events have been gradually relieved.

\subsubsection{Impact of System Damage Levels}
In this section, we fix the budget level at $\overline{\Upsilon}=4$ and DG damage level at $k^{DG}=1$, and then investigate the optimal hardening results under different levels of lines damage severity (represented by parameter $k^L$). The test results are shown in Table \ref{tab:testB}. To make the exposition more explicit, the objective value is expressed in a proportional form (i.e., the ratio of load shedding). When $\overline{\Upsilon}=4$ and $k^L=2$, the distribution lines L3-4 and L3-23 are hardened to protect the N23, which has high power demand. DG4 and DG5 are also protected to avoid losing the energy resources. The worst-case scenario is identified as the faulting of lines L23-24 and L29-30, which leads to the power interruption of N24 and N25, and insufficient power supply of N30-N31-N32-N33 area. When $k^L$ increases to 4, the distribution lines L1-2, L2-3, L3-23 and L23-24 are hardened to ensure the power supply through upstream grid and protect the area with high power demands. The damages are imposed on L7-8, L10-11, L24-25, L29-30 and DG5, which cuts off the power supply of N8-N10, N25, and N30-N33. As a result, the ratio of load shedding increases from 23.7\% to 35.4\%. Then, under $N-6$ contingencies, the hardening scheme remains while additional lines L14-15 and L32-33 are disrupted. Besides, DG5 is expected to be damaged instead of DG4, resulting in a new island, i.e., I2 in Fig. \ref{fig:HS}(d). The load shedding proportion further raises to 41.3\%. Finally, when $k^L=8$, we have L11-14, L17-18 and L21-22 as additionally damaged lines, while the faulting generator alters to DG3. The islanding area expands to N12-N14 and N22, which results in the growth of load shedding proportion to 47.4\%. 

\renewcommand\arraystretch{1.2}
\begin{table*}[!t]
	\centering
	\caption{Hardening Results under Different Damage Levels: 33-Bus Test System}
	\setlength{\tabcolsep}{2.0mm}
	\scalebox{1.0}{
		
		\begin{tabular}{ccccc}
			\toprule
			$k^L$  & $k^{DG}$ &  Hardening Scheme & Worst-Case Contingency Scenario & OBJ  \\
			\midrule		
			2 & 1&  L3-4, 3-23; DG4, DG5  & L23-24, 29-30; DG2 &  23.7\% \\
			4 & 1& L1-2, 2-3, 3-23, 23-24  & L7-8, 10-11, 24-25, 29-30; DG5 & 35.4\%  \\	
			6 & 1& L1-2, 2-3, 3-23, 23-24  & L7-8, 10-11, 14-15, 24-25, 29-30, 32-33; DG4 & 41.3\%\\
			8 & 1& L1-2, 2-3, 3-23, 23-24  & L7-8, 10-11, 11-12, 17-18, 21-22, 24-25, 29-30, 32-33; DG3  & 47.4\% \\	
			\bottomrule	  
	   \end{tabular}}
	\label{tab:testB}
\end{table*}

Clearly, with the growing severity of network contingencies, the proportion of load shedding increases monotonously. However, an interesting observation is the unchanged hardening plan for the test instances where $k^L \geq 4$, which could be attributed to the following reasons: (i) The power import through substation node is very important for ensuring the resilient operation, so that L1-2 and L2-3 should always be protected. (ii) The power demands at N23 and N24 are higher than the rest electric nodes, leading to the reinforcement of L3-23 and L23-24 in priority.

\subsubsection{Comparative Solution Tests of P-C\&CG with Other Methods}
To verify the computational feasibility of P-C\&CG algorithm, it is tested in comparison with two benchmark methods, i.e., the basic C\&CG \cite{zeng2013solving} and modified Benders decomposition (BD) \cite{Zhang2022}. By applying the basic C\&CG, we first reformulate $\mathbf{DDU-RSO}$ into a DIU-based model by using the method in section III-C. Different from C\&CG, the modified BD algorithm adopts the dual information of recourse problems to construct the cutting planes. Also, it depends on an extra subproblem to guarantee the robust feasibility. 

Table \ref{tab:OS} presents the solution performance of different algorithms under varying levels of line damage severity ($k^L$) and hardening budget ($\overline{\Upsilon}$). The maximum number of damaged DGs ($k^{DG}$) is fixed at 1. The objective value is expressed in a proportional form (i.e., the ratio of load shedding).
Also, the number of required iterations (N\_itr) and computation time (counted in seconds) are given as performance metrics. The test instances that fail to converge in 12 hours are marked by ``T''. It can be seen that all these solution methods can exactly obtain the global optimum. But the basic C\&CG and P-C\&CG demonstrate a superior scalable capability than the modified BD, since the last one could be more sensitive to the change of model parameters. With small values of $k^L$ and $\overline{\Upsilon}$ (e.g., $k^L=4$, $\overline{\Upsilon}=4$), which generally correspond to a small-size uncertainty set, the modified BD spends less time to compute the $\mathbf{DDU-RSO}$ than C\&CGs. However, by increasing either $k^L$ or $\overline{\Upsilon}$, it may lead to an exponential increase of both the iteration number and computation time. In contrast, the solution performance of C\&CGs keeps steady, or even gets better sometimes. For example, given $k^L=8$ and increasing  $\overline{\Upsilon}$ from 4 to 6, the modified BD needs 2.6 times more iterations (increasing from 108 to 396) to converge, which leads to an increase of computation time from 422.9s to 6465.2s. In such instance, P-C\&CG only needs 22 iterations and 210.2s (even less than the test case with $k^L=8$ and $\overline{\Upsilon}=4$) to attain the optimal solution. The computation time is significantly reduced by more than 97\%. When $k^L=8$ and $\overline{\Upsilon}=8$, the P-C\&CG takes 1358.3s (nearly 22.6 minutes) to exactly solve the $\mathbf{DDU-RSO}$, while BD cannot converge within 12 hours. So, for those challenging test cases with huge uncertainty sets, the C\&CG algorithms can be orders-of-magnitudes faster than BD. 

\begin{remark}
\label{rmk:6}
Note that with the growing of iteration number, the $\mathbf{MP}$s could become intractable due to the exponential increase of problem complexity. As indicated by our test results and those in \cite{zeng2022two}, the parametric representation of primal cuts as in \eqref{Cut1}-\eqref{Cut5} is with a higher strength than the Benders-type dual cutting planes. So different from BD, the P-C\&CG often has a good and steady convergence performance with the change of model parameters. It explains the time reduction achieved by P-C\&CG especially for those large-scale instances. 
\end{remark}

Additionally, the computation of $\mathbf{DDU-RSO}$ formulation using P-C\&CG is generally faster than solving its DIU-based equivalence. Although the computing of $\mathbf{MP}$s in P-C\&CG could be more time-consuming due to the complex structure of $\mathcal{OU}$ set, it usually requires fewer iterations to converge by including more abundant recourse information. For example, when $k^L=8$ and $\overline{\Upsilon}=8$, it needs 49 iterations and 1472.1s to achieve the global optimality of DIU-based RSO problem. By directly computing $\mathbf{DDU-RSO}$, the iteration number is reduced to 42, which saves the solution time by 8\%. These results are consistent with the conclusion of \cite{zeng2022two}, i.e., the DDU modeling is actually beneficial for designing more efficient algorithms than its DIU-based equivalence.

\renewcommand\arraystretch{1.0}
\begin{table}[!t]
	\centering
	\caption{Computational Performance of Different Solution Methods: 33-Bus Test System}
	\setlength{\tabcolsep}{1.5mm}
	\scalebox{1}{	
		\begin{tabular}{ccccccccc}
			\toprule
			\multirow{2}{*}{$k^L$} & \multirow{2}{*}{$\overline{\Upsilon}$}  & \multirow{2}{*}{OBJ} & \multicolumn{2}{c}{P-C\&CG} & \multicolumn{2}{c}{Basic C\&CG} & \multicolumn{2}{c}{Modified BD}  \\
			&    &   & N\_itr & Time/s   & N\_itr & Time/s  & N\_itr & Time/s   \\
			\midrule
			\multirow{3}{*}{4} & 4 & 35.4\%  & 15     & 69.0   & 16    & 75.8     & 23    & 37.6    \\
			& 6    & 22.9\%   & 13   & 56.8     & 14       & 68.8      & 50      & 115.3         \\
			& 8    & 19.8\%     &20   &140.3     & 22      & 160.5    & 93    & 331.5     \\
			\midrule
			\multirow{3}{*}{6} & 4   & 41.3\%   &17   &105.5     & 18     & 112.2     & 44      & 125.9       \\
			& 6    & 29.8\%  &18   &119.5        & 19      & 125.9      & 127      & 526.5       \\
			& 8   & 25.0\%  &33   &526.3     & 35      & 469.9     & 323      & 3864.7
			    \\
			\midrule
			\multirow{3}{*}{8} & 4  & 47.4\%   & 25   &226.4     & 26      & 248.9      & 108      & 422.9       \\
			& 6  &  35.0\%  & 22   &210.2      & 24      & 214.6      & 396     & 7165.2           \\
			& 8  & 29.9\%   &42  &1358.3       & 49      & 1472.1     & /      & T        \\
			\bottomrule
	\end{tabular}}
	\label{tab:OS}
\end{table}

\subsubsection{Solution Tests for Enhanced P-C\&CG}
We further investigate the enhancement strategy for P-C\&CG, and present the results of solution tests in Table \ref{tab:PP}. Our enhanced customization is effective, which, compared to the standard procedures of P-C\&CG, has reduced the computation time by exceeding 50\% for most test instances. For example, when $k^L=8$ and $\overline{\Upsilon}=6$, the P-C\&CG originally takes 22 iterations and 210.2s to compute $\mathbf{DDU-RSO}$. After applying our enhancement strategy, it only needs 7 iterations to converge. The resulting computation time is 23.8s, which is decreased by 88.7\%. The iterative procedures of both methods are illustrated in Fig. \ref{fig:IP}. We notice an evident enhancement in convergence performance, where both the $UB$ and $LB$ change more sharply. So the incorporating of resilience importance indices helps yield tightened bounds, leading to a faster convergence without hurting the solution exactness. 

Moreover, by comparing the enhanced P-C\&CG with modified BD, a more significant improvement in solution performance  can be observed. Even for those ``small-scale'' cases (e.g., $k^L=4$, $\overline{\Upsilon}=4$), the enhanced P-C\&CG spends less computation time. Except for this case, the time saving achieved by our enhanced algorithm ranges from 64.3\% to over 99.7\%.

\renewcommand\arraystretch{1.0}
\begin{table}[!t]
	\centering
	\caption{Computational Performance of Enhanced P-C\&CG vs. Its Original Version: 33-Bus Test System}
	\setlength{\tabcolsep}{4mm}
	\scalebox{1}{
		\begin{tabular}{cccccc}
			\toprule
			\multirow{2}{*}{$k^L$} & \multirow{2}{*}{$\overline{\Upsilon}$}  & \multicolumn{2}{c}{P-C\&CG} & \multicolumn{2}{c}{Enhanced P-C\&CG}   \\
			&       & N\_itr & Time/s   & N\_itr & Time/s   \\
			\midrule
			\multirow{3}{*}{4} & 4   & 15    & 69.0    & 10   & 28.1     \\
			& 6   & 13   & 56.8      & 10   & 41.2          \\
			& 8   & 20  &140.3     &10   &39.6       \\
			\midrule
			\multirow{3}{*}{6} & 4     &17   &105.5    &9   &33.6     \\
			& 6    &18   &119.5       &11   &58.6   \\
			& 8    &33   &526.3     &14   &78.1    \\
			\midrule
			\multirow{3}{*}{8} & 4    & 25   &226.4     & 9   &36.9       \\
			& 6     &22   &210.2     &7   &23.8      \\
			& 8     &42   &1358.3    &19  &189.1  \\
			\bottomrule
	\end{tabular}}
	\label{tab:PP}
\end{table}

\begin{figure}[!]
	\centering
	\includegraphics[width=3.3in]{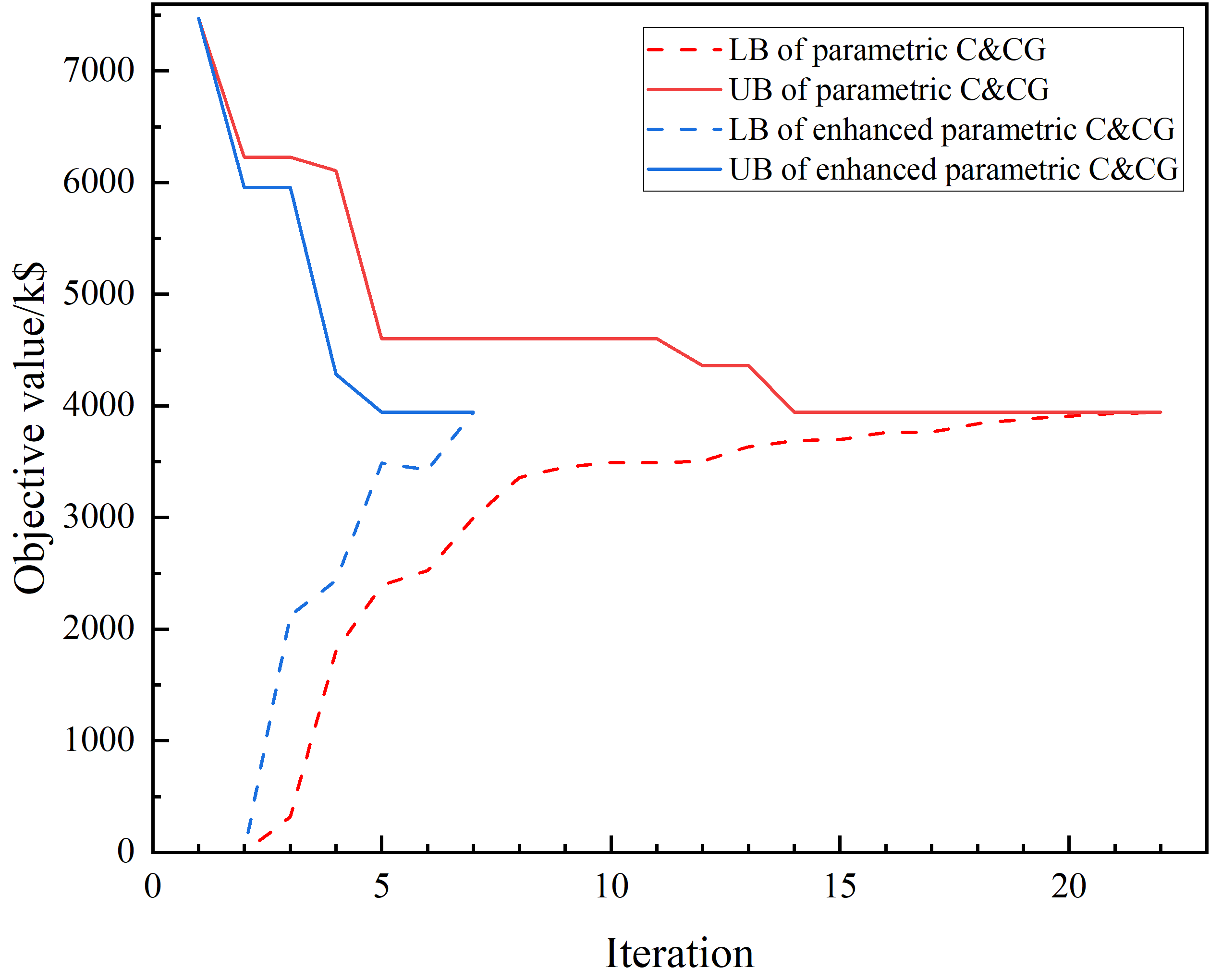}
	\caption{Iterative procedures of P-C\&CG algorithm with/without enhancement ($k=6$ and $\overline{\Upsilon}=8$)}
	\label{fig:IP} 
\end{figure}

\subsection{118-Bus Test Distribution Network}
The scalability of the proposed hardening method is tested on a 118-bus distribution network. It is equipped with 8 DGs and 7 ESSs, with an identical power rating as 500 kW. Also, the nominal storage capacity of each ESS unit is 1500 kWh, with a random fluctuation range of $\pm$20\%. Given $\overline{\Upsilon}=8$, $k^L=8$, and $k^{DG}=3$, the hardening scheme and associated worst-case contingency scenario is presented in Fig. \ref{fig:HS118}. Note that the blue shaded region contains the vulnerable distribution lines. The DGs located on nodes N9, N24, N50 and N57 could also be affected by extreme weather events. 

As shown in Fig. \ref{fig:HS118}, the worst-case contingency scenario is identified as the outages of L1-2, L2-10, L10-11, L11-18, L18-19, L23-24, L24-25 and L29-55, along with the DGs located at N10. The entire system will be divided into 9 parts, including 4 electrical islands and 4 dynamic microgrids (powered by DGs and ESSs). In such a case, the hardening decisions are made to protect lines L1-63, L63-64, L1-100, L100-101 and L100-114, which preserves the connectivity of substation node (i.e., N1) to a part of the critical nodes. Besides, due to the elevating protection for DGs at N24, N50 and N57, only the DGs at N10 suffers from components failure. The hardened DGs facilitates the on-emergency power supply of 3 microgrids, which expands the restoration coverage. As a consequence, the ratio of load shedding is controlled as 31.1\%.

Moreover, the solution tests are performed under different combinations of $k^L$ and $\overline{\Upsilon}$, by fixing $k^{DG}$ to 1. The enhanced P-C\&CG algorithm is compared with basic C\&CG and modified BD. As shown in Table \ref{tab:OS2}, the modified BD, in most of the test cases, fails to converge within the time limit ``T'' (i.e., 12 hours). In contrast, both the basic C\&CG and enhanced P-C\&CG can obtain the global optimum within 3850s, demonstrating the superior performance of C\&CGs for large-scale test systems. Additionally, in comparison to the basic C\&CG, the proposed P-C\&CG algorithm is more effective, reducing the average solution time from 1017.6s to 459.9s. For instance, when $k^L=8$ and $\overline{\Upsilon}=8$, the basic C\&CG requires 43 iterations and 1995.2s to converge. In contrast, by applying the enhanced P-C\&CG can reduce the iteration number to 14. The corresponding  solution time drops to 265.3s (reducing by 86.7\%). The aforementioned results have demonstrated the scalable solution capability of our customized algorithm, which significantly outperforms the existing solution measures.

\begin{figure*}[!t]
	\centering
	\includegraphics[width=5in]{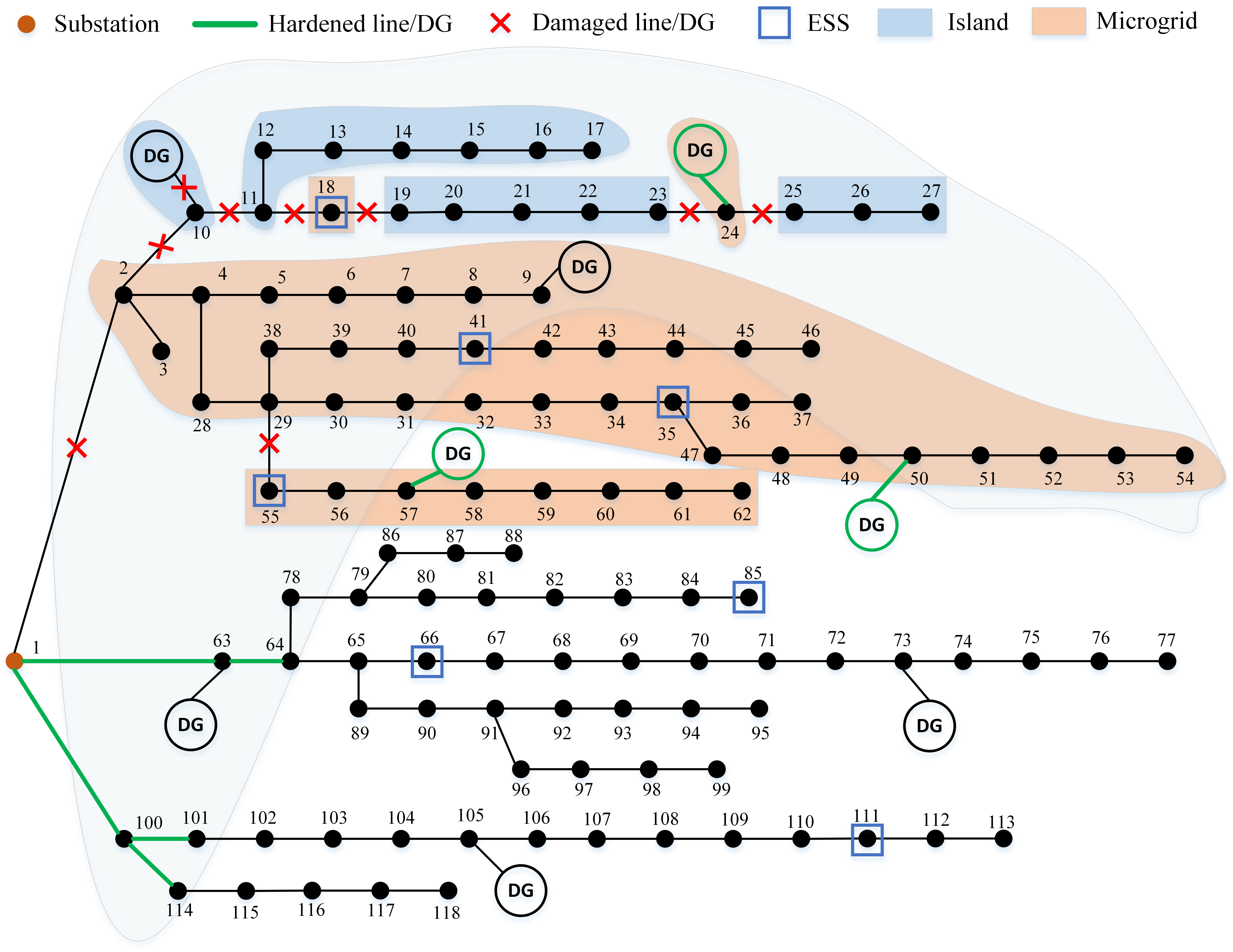}
	\caption{Resilient Hardening Results ($\overline{\Upsilon}=8$, $k^L=8$, and $k^{DG}=3$): 118-Bus Test System}
	\label{fig:HS118} 
\end{figure*}

\renewcommand\arraystretch{1.0}
\begin{table}[!t]
	\centering
	\caption{Computational Performance of Different Solution Methods: 118-Bus Test System}
	\setlength{\tabcolsep}{1.4mm}
	\scalebox{1}{
		\begin{tabular}{ccccccccc}
			\toprule
			\multirow{2}{*}{$k^L$} & \multirow{2}{*}{$\overline{\Upsilon}$}  & \multirow{2}{*}{OBJ} & \multicolumn{2}{c}{Enhanced P-C\&CG} & \multicolumn{2}{c}{Basic C\&CG} & \multicolumn{2}{c}{Modified BD}  \\
			&    &   & N\_itr & Time/s   & N\_itr & Time/s  & N\_itr & Time/s   \\
			\midrule
			\multirow{3}{*}{6} & 6 & 30.2\%  & 14     & 277.6   & 24    & 580.9     & 293    & 6394.9    \\
			& 8    & 28.0\%   & 17   & 459.8     & 38       & 1532.9      & /      & T        \\
			& 10    & 19.9\%     &9   &79.9     & 11      & 105.2    & /    & T     \\
			\midrule
			\multirow{3}{*}{8} & 6   & 31.6\%   &17   &524.8     & 21     & 452.1     & /      & T       \\
			& 8    & 29.5\%  &14   &265.3        & 43      & 1995.2      & /      & T       \\
			& 10   & 21.4\%  &8   &66.2     & 10      & 79.7     & /      & T
			\\
			\midrule
			\multirow{3}{*}{10} & 6  & 32.5\%   & 10   &218.6     & 21      & 458.1      & /      & T       \\
			& 8  &  30.7\%  & 20   &2170.4      & 58      & 3847.2      & /     & T          \\
			& 10  & 22.8\%   &8  &76.6       & 11      & 107.5     & /      & T        \\
			\bottomrule
	\end{tabular}}
	\label{tab:OS2}
\end{table} 

\section{Conclusion and Future Direction}
\label{Conclude}
This paper presents a DDU-based two-stage RSO formulation to support the proactive hardening decisions of distribution network and DGs against extreme events. A closed-form DDU set is constructed to analytically represent the endogenous uncertainty of system contingencies. Also, the exogenous uncertainties of load variations and available storage levels are considered through scenario-based SP in the innermost level problem. To address the computational challenges brought by a dynamically-shaped uncertainty set and the complicated trilevel stochastic structure, we develop and customize the P-C\&CG algorithm with nontrivial enhancements. Numerical studies on 33-bus and 118-bus test power distribution networks demonstrate the effectiveness and scalable capability of our resilient hardening method. Some key observations and insights are presented as in the following:

\begin{enumerate}
	\item \emph{Modeling Capability of DDU-RSO}: The DDU-RSO model provides a comprehensive vision on both the endogenous and exogenous uncertainties, which are coupled with different decision stages (i.e., the proactive hardening and post-event corrective operations). This new formulation can accurately characterize the impact of hardening actions on system contingencies, which enables robustly feasible decisions for resilience enhancement.
	\item \emph{Superiority of Enhanced P-C\&CG over Existing Solution Techniques}:  The customized P-C\&CG provides an exact and scalable solution tool to address the DDU-based two-stage RSO problem. Compared to the basic C\&CG and Benders-type dual decomposition method, the proposed algorithm demonstrates a superior solution capability, which has reduced the computation time by orders of magnitudes. 
	\item \emph{Significance of Exploiting ``Deep-Knowledge''}:  Our algorithm design can better utilize the deep knowledge of the underlying system (e.g., the inclusion of resilience importance indices) to achieve a strong scalable capacity. Such idea can be applied to investigate similar hardening issues among critical infrastructure systems.
\end{enumerate}

In our future work, the proposed method will be extended to achieve a greater practicality and solution adaptability. Note that our formulation is compatible for more detailed contingency models with respect to different types of catastrophic events. Additional resilience enhancement measures (e.g., the on-emergency demand response, network reconfiguration, and mobile energy resources) can be included. Certainly, the modeling of reconfiguration and mobile resources scheduling may introduce binary recourse variables, which converts the innermost level problem into an MIP. Although the classical RO model with MIP recourse can be exactly computed using the nested C\&CG algorithm \cite{zhao2012exact}, it cannot be directly applied to a novel DDU-RSO formulation. The future enhancement of P-C\&CG will focus on the efficient solution of DDU-RSO with mixed-integer recourse decisions.

\ifCLASSOPTIONcaptionsoff
  \newpage
\fi

\bibliographystyle{IEEEtran}
\bibliography{Ref}

\end{document}